\numberwithin{equation}{section}
\theoremstyle{plain}
\newtheorem{proposition}{Proposition}[section]
\newtheorem{corollary}[proposition]{Corollary}
\newtheorem{lemma}[proposition]{Lemma}
\newtheorem{theorem}[proposition]{Theorem}
\theoremstyle{definition}
\newtheorem{definition}[proposition]{Definition}
\newtheorem{remark}[proposition]{Remark}
\DeclareMathOperator{\sgn}{sgn}
\newcommand{\R}{\mathbf{R}}
\newcommand{\C}{\mathbf{C}}
\newcommand{\abs}[1]{\left\lvert #1 \right\rvert}
\newcommand{\avg}[1]{\bigl\langle #1 \bigr\rangle}
\begin{document}
\title{Shock creation and Painlev\'{e} property of colliding peakons in the Degasperis-Procesi Equation}
\author{
  Jacek Szmigielski\thanks{Department of Mathematics and Statistics, University of Saskatchewan, 106 Wiggins Road, Saskatoon, Saskatchewan, S7N 5E6, Canada; szmigiel@math.usask.ca}
  \and  Lingjun Zhou\thanks{Department of Mathematics, Tongji University, Shanghai, P.R. China; zhoulj@tongji.edu.cn}}

\date{\today}
\maketitle
\begin{abstract}

The Degasperis-Procesi equation (DP) is one of several equations known to model important
nonlinear effects such as wave breaking and shock creation.  
It is, however, a special property of the DP equation that these two effects can be studied in an 
explicit way with the help of the multipeakon ansatz.  In essence this ansatz allows one to 
model wave breaking as a collision of hypothetical particles (peakons and antipeakons), 
called henceforth collectively multipeakons.  
It is shown that DP multipeakons have Painlev\'{e} property which implies a universal wave breaking behaviour, that multipeakons
can collide only in pairs, and that there are no multiple collisions other than, possibly simultaneous, collisions of peakon-antipeakon pairs at different locations.  Moreover, it is demonstrated that each peakon-antipeakon collision results in creation of a shock thus making possible a multi shock phenomenon.
\end{abstract}
\section{Introduction} \label{sec:intro}
The Degasperis-Procesi (DP) equation \cite{degasperis-procesi}
\begin{equation}\label{eq:DP}
u_t-u_{txx}+4uu_x=3u_x u_{xx}+uu_{xxx} 
\end{equation}
belongs to a class of one-dimensional wave equations which have attracted considerable 
attention over the last decade, following the most studied equation in this class, namely, the Camassa-Holm (CH) equation \cite{camassa-holm} 
\begin{equation*}
u_t-u_{txx}+3uu_x=2u_x u_{xx}+uu_{xxx}.  
\end{equation*}
Both these equations can be derived from the governing equations under the assumption of 
moderate amplitude \cite{Johnson, Constantin-Lannes}.  What makes them special is that, on one hand, both are Lax integrable, on the other, both 
exhibit wave breaking phenomenon not captured by linear theory or shallow water, small amplitude theory like the Korteweg-deVries equation.  The most relevant to this paper study of the breakdown of solutions for the CH equation was done by H.P. McKean \cite{McKean-breakdown, McKean-Asianbreakdown}.  In these works, it was argued that the breakdown of the CH waves is 
controlled by a kind of caricature of the higher dimensional vorticity, namely, $m=u-u_{xx}$ (see \cite{Majda-Bertozzi} ).   
In particular, it is the initial relative position of regions with positive $m$ versus negative $m$ that signals whether the breakdown will happen at some later, finite, time.  
One of the fascinating aspects of both the CH and DP equations is the existence of 
special solutions, peakons, which play the role of basic building blocks of the underlying full theory.  Peakons are 
a simple superposition of exponential terms 
\begin{equation*}
u(x,t)=\sum_{j=1}^n m_j(t)e^{-|x-x_j(t)|}
\end{equation*}
for which the function $m$ referred to earlier is $m=2\sum_{j=1}^n m_j \delta_{x_j}$.  
Were we to take the analogy with vorticity at its face value, $m$ for peakons 
could be viewed as a collection of point vortices, situated at $x_j(t)$s, of strength $m_j(t)$ each,  initially ordered in some fixed way, say, $x_1(0)<x_2(0)<\cdots< x_n(0)$.    
The case of CH peakons shows that if the strengths $\{m_k(0): 1\leq k\leq n\}$ are not of the same sign then {\sl collisions}  can occur, meaning that $x_j(t_c)=x_{j+1}(t_c)$ for some $j$  and some time 
$t_c$.  Each collision is accompanied by a blow-up of $m_j(t_c)$ and $m_{j+1}(t_c)$
resulting in the  derivative $u_x(t_c)$ becoming unbounded even though $u(t_c)$ remains bounded, in fact continuous.  Thus peakons can be used to test ideas about 
wave breaking, the advantage being that the peakon dynamics is described by a finite 
system of ODEs (see Section \ref{sec:basic} ).  
The analysis of the CH peakon collisions in this case was done in \cite{bss-moment} with the help of explicit formulas.  In short, the CH peakon problem can be solved by Stieltjes' 
formulas involving continued fractions \cite{bss-stieltjes}.  Moreover, the underlying 
boundary value problem is self-adjoint, in fact it is equivalent to an inhomogeneous string which remains isospectral under the CH flow.  

The case of the DP equation is superficially similar to the CH case.  However, 
deeper analysis shows a remarkable number of new features.  
For example, the associated spectral problem, termed a cubic string in 
\cite{ls-cubicstring}, is not self-adjoint and  
this has the immediate consequence that the inverse 
problem is by far more involved.  The peakon problem in the case of the positive 
measure, that is when all weights $m_j$s are positive, was solved explicitly in \cite{ls-cubicstring}.  
However, the generalization to the case of a signed measure $m$ is not straightforward 
since the spectral data  breaks up into several types depending on the degeneracy of the spectrum, 
as well as on certain coincidental phenomena of anti-resonances (eigenvalues $z_i, z_j$ 
pairing according to $z_i+z_j=0$).   
By contrast, the distinction between peakons for positive measure $m$ and peakons for the signed measure $m$ is less sharp for the CH case where the formulas for peakons can be analytically continued 
from former to latter.  This is not so for the DP case.  
This difficulty notwithstanding, in a way analogous to what happens in the CH case, the presence of  a collision signals an occurrence of wave breaking; in the DP
 context the connection between wave breaking and peakon collisions was studied earlier by H. Lundmark in \cite{lundmark-shockpeakons} for the case $n=2$ and further by the 
 present authors
 in \cite{sz1} for $n=3$.  Important questions of stability and general analytic results dealing 
 with DP peakons and the DP wave breaking have been addressed in  \cite{liu1, liu2, liu-zhaoyang, liu-escher}.  A considerable amount of work has been also done on 
 adapting numerical schemes to deal with the DP equation; we just mention a few: an operator splitting method of Feng and Liu \cite{liu-split}, or numerical schemes discussed by  Coclite, Karlsen and Risebro in \cite{coclite-karlsen-risebro}.

 The DP equation, in contrast to the CH equation, admits shock solutions (see \cite{coclite-karlsen-DPwellposedness,coclite-karlsen-DPuniqueness} for a general, very thorough, discussion). 
 It was H. Lundmark who introduced the concept of shockpeakons 
 \begin{equation*}
u(x,t)=\sum_{j=1}^n \{m_j(t)-s_j(t) \sgn (x-x_j(t))\}e^{-|x-x_j(t)|}, 
\end{equation*}
for which 
\begin{equation}\label{eq:shockm}
m=2\sum_{j=1}^n\{m_j \delta_{x_j}+s_j \delta_{x_j}'\}, 
\end{equation}
and showed that the solution describing a collision of two peakons 
has a unique entropy extension to shockpeakons.  He also 
hypothesized that this might be a general phenomenon valid also for $n>2$.  
We prove his conjecture.  More precisely we prove that 
the distributional limit of $m$ at the collision point  $t_c$ indeed produces shockpeakon data \eqref{eq:shockm} with positive shock strengths $s_j$ thus allowing 
a unique entropy weak extension (see Theorem \ref{thm:shocks} and Corollary \ref{cor:shocks}).  

Let us briefly describe our strategy.  
Instead of analyzing numerous spectral types we concentrate on analytic properties
of $x_j(t), m_j(t)$ as functions of $t$.  To this end we analyze the inverse spectral problem 
for the cubic string with the input data of a finite, signed measure.  We prove that each $x_j(t)$ must be a holomorphic function at $t_c$, while $m_j(t)$ is in general only meromorphic (Theorem \ref{thm:merocentral}).  Then we perform singularity analysis of the ODEs describing peakons \eqref{eq:b-ode-short} and prove their Painlev\'{e} property with the help of Theorems \ref{thm:merocentral} and \ref{thm:Pain} followed by a singularity analysis at the 
time of collisions described by Theorem \ref{thm:leadingcoef}.

The plan of the paper is as follows:
we review basic facts about the DP equation in Section \ref{sec:basic},
in Section \ref{sec:IP} we discuss the inverse problem for peakons of both signs
generalizing the uniqueness result known from the pure peakon case \cite{ls-cubicstring} and use this result to establish analytic properties of positions $x_j$s and masses $m_j$s.  
 In Section \ref{sec:blowup} we analyze the singular behaviour at the time of collisions and establish a universal singularity pattern according to which, in the leading term, only the time of the blowup depends on the initial conditions not the residue.  This fact is proven in 
Theorem \ref{thm:leadingcoef}.  We furthermore rule out triple collisions in Theorem \ref{thm:notriple}, and give an example of a simultaneous collision, in different 
positions, of two peakon-antipeakon pairs; finally in Section \ref{sec:shocks} we 
prove Theorem \ref{thm:shocks} stating that the distributional limit of colliding peakons is indeed a shockpeakon.

\section{Basic Facts about the DP equation}\label{sec:basic}
The nonlinear equation
\begin{equation}
  \label{eq:DP}
  u_t - u_{xxt} + 4u u_x = 3u_x u_{xx} + u u_{xxx},
\end{equation}
often written as
\begin{equation}
  \label{eq:DP-m}
  m_t + m_x u + 3m u_x = 0, \qquad m = u - u_{xx},
\end{equation}
was introduced by Degasperis and Procesi \cite{degasperis-procesi}
as an example of a nonlinear partial differential equation satisfying asymptotic integrability appearing in the family of
third order dispersive equations:
\begin{equation*}
u_t-\alpha^2 u_{xxt}+\gamma u_{xxx}+c_0 u_x=(c_1u^2+c_2u_x^2+c_3 uu_{xx})_x,
\end{equation*}
other examples of integrable equations in this family are
the Korteweg-deVries equation (KdV) and the Camassa-Holm (CH) equation.
Formal integrability for the DP equation was established by Degasperis, Holm and Hone \cite{degasperis-holm-hone} through the construction of
a Lax pair and a bi-Hamiltonian structure.
In particular, it was shown in \cite{degasperis-holm-hone} that the DP equation admits
the Lax pair:
\begin{equation}\label{eq:Lax}
(\partial_x-\partial_{xxx})\Psi=zm\Psi, \qquad
\Psi_t=[z^{-1}(1-\partial^2_x)+u_x-u\partial_x]\Psi.
\end{equation}
Moreover, one can impose additional boundary conditions provided they do not violate the compatibility of these equations.   One such a pair of boundary
conditions was introduced in \cite{ls-cubicstring}:
\begin{equation}\label{eq:asymptotics}
\Psi\rightarrow e^x,  \text{ as } x\rightarrow -\infty, \qquad  \Psi \text{ is bounded as   } x\rightarrow +\infty,
\end{equation}
where it was also shown that the spectrum of this boundary value problem
will remain time invariant (isospectral deformation).  It suffices for our purposes to
restrict our attention to the case in which $m$ is a finite discrete (signed) measure.  Thus for the remainder of the paper we will use the {\sl multipeakon} ansatz
\begin{equation}\label{eq:ansatz}
u(x,t)=\sum_{i=1}^nm_i(t)\, e^{-|x-x_i(t)|}
\end{equation}
where $x_1(0)<x_2(0)<\cdots<x_n(0)$ and $m_i(0)$ can have both positive and negative values.  This ansatz produces $m=2\sum_{i=1}^n m_i \delta_{x_i}$.
Moreover, with the proper interpretation of weak solutions to equation \eqref{eq:DP} we
can easily check that $u$ is a weak solution to \eqref{eq:DP} provided $x_i(t),m_i(t)$ satisfy the following ODEs
\begin{subequations}
  \label{eq:b-ode-short}
  \begin{align}
 & \dot x_k(t) = u(x_k)=\sum_{i=1}^n m_i(t)\, e^{-|x_k(t)-x_i(t)|}, \label{eq:b-ode-short-x}
  \qquad\\
  &\dot m_k(t) = 2 \, m_k(t) \, \avg{u_x(x_k)}=2\, m_k(t) \sum_{i=1}^n m_i(t) \sgn(x_k(t)-x_i(t)) \, e^{-|x_k(t)-x_i(t)|}, \label{eq:b-ode-short-m}
  \end{align}
\end{subequations}
where $\avg{f(x)}=\lim_{\epsilon \to 0^+}  \frac12 [f(x+\epsilon)+f(x-\epsilon)]$ is the average of $f$ at the point $x$.  We will refer to $m_j$s as {\sl masses} to emphasize
their role in the spectral problem. We also need a bit of terminology regarding
the phenomenon of breaking.  We will say that a {\sl collision } occurred
 at some time $t_c$  if  $x_i(t_c)=x_{i+1}(t_c)$ for some $i$. We can make this concept
 more geometric by introducing a configuration space in which to study peakon solutions,
 namely the sector $X=\{{\bf x} \in \R^n\ |x_1<x_2<\cdots<x_n\}$.  Then
 a collision corresponds to the solution $x_i$ hitting the boundary of $X$.

A very useful property of equations \eqref{eq:b-ode-short} is the
existence of $n$ constants of motion.  This follows readily from Theorem 2.10 in \cite{ls-cubicstring}.
\begin{lemma}\label{lem:constants}
$M_p\,(1\leq p\leq n)$, given by:
\[M_p=\sum_{I\in {[1,k]}\choose p}\left(\prod_{i\in I}m_i\right)\left(\prod_{j=1}^{p-1}(1-e^{x_{i_j}-x_{i_{j+1}}})^2\right)\]  are $n$ constants of motion of the system of equations \eqref{eq:b-ode-short}, where $\binom{[1,n]}{p}$ is the set of all $p$-element subsets
  $I=\left\{ i_1 < \dots < i_p \right\}$ of $\left\{ 1,\dots,n \right\}$.
\end{lemma}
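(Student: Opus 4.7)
The plan is to deduce conservation of the $M_p$ from the isospectrality of the cubic string boundary value problem \eqref{eq:Lax}--\eqref{eq:asymptotics}. Since compatibility of the Lax pair implies that the spectrum of $(\partial_x - \partial_{xxx})\Psi = z m \Psi$, with $\Psi\sim e^x$ at $-\infty$ and $\Psi$ bounded at $+\infty$, is independent of $t$ along the DP flow, every symmetric function of the eigenvalues is automatically a constant of motion of the induced peakon system \eqref{eq:b-ode-short}. The task therefore reduces to two purely algebraic steps: (i) compute the characteristic function of the spectral problem explicitly for the discrete measure $m=2\sum_{i=1}^n m_i\delta_{x_i}$, and (ii) identify its coefficients, as polynomials in the $m_i$ and $e^{x_i}$, with the sums $M_p$ written in the lemma.

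For step (i), I would reproduce the transfer-matrix reduction of \cite{ls-cubicstring}. Between two consecutive masses the spectral equation becomes the homogeneous ODE $\Psi_x-\Psi_{xxx}=0$, so $\Psi$ is piecewise a linear combination of $1,e^x,e^{-x}$. The singular source $z m_i\delta_{x_i}$ translates into jump conditions at each $x_i$ that are linear in $zm_i$ and involve $e^{\pm x_i}$; composing the resulting site-transfer matrices and imposing the two asymptotic constraints produces a single polynomial relation in $z$ of degree $n$, whose roots are the spectral values of the boundary value problem.

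Step (ii) is exactly the content of Theorem 2.10 of \cite{ls-cubicstring}, which expands this polynomial and identifies the coefficient of each power of $z$ (up to normalization) with $M_p$. The factor $(1-e^{x_{i_j}-x_{i_{j+1}}})^2$ arises naturally when the transfer-matrix product is reduced to a symmetric-function expression in the variables $e^{-x_i}$, and reflects the squared-Vandermonde structure characteristic of the non-self-adjoint cubic string. Because this computation is purely algebraic in $m_i$ and $e^{x_i}$, it applies verbatim to signed weights, which is the setting needed here.

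The main obstacle is one of verification rather than substance: one must confirm that Theorem 2.10 of \cite{ls-cubicstring}, although originally derived in the pure-peakon regime where $m_i>0$, uses neither positivity of the weights nor simplicity or reality of the spectrum — each step is a polynomial identity in the variables $m_i$ and $e^{x_i}$ and hence extends immediately to the signed situation. Combined with the observation that the ordering $x_1(t)<\cdots<x_n(t)$ is preserved up to the first collision, so that the formula for $M_p$ is well defined along the flow, this ensures that each $M_p$ is a smooth function of $t$ whose conservation is inherited from isospectrality throughout the interval of existence of the peakon ODEs.
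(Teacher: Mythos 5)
Your proposal is correct and follows essentially the same route as the paper, whose entire proof is the remark that the lemma ``follows readily from Theorem 2.10 in \cite{ls-cubicstring}''; you simply make explicit the underlying mechanism (isospectrality of the cubic string under the DP flow plus the identification, from that theorem, of the spectral polynomial's coefficients with the $M_p$). Your observation that the identification is a polynomial identity in $m_i$ and $e^{x_i}$, hence valid for signed weights and degenerate or complex spectrum, is exactly the point needed to transfer the pure-peakon result to the present setting.
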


\section{Inverse Problem for multipeakons}\label{sec:IP}
The boundary value value problem \eqref{eq:Lax} and \eqref{eq:asymptotics}
can be transformed to a finite interval boundary value problem, the cubic string
problem.  Indeed,
following \cite{ls-cubicstring}, the change of variables (Liouville transformation)
\begin{equation} \label{eq:Liouville}
y=\tanh \frac x2, \qquad \Psi(x)=\frac{2\phi(y)}{1-y^2}
\end{equation}
maps the DP spectral problem into the cubic string problem:
\begin{subequations}\label{eq:cubicstring}
\begin{align}
-\phi_{yyy}(y)&=zg(y)\phi(y), \qquad -1<y<1, \label{eq:acubicstring}\\
\phi(-1)&=\phi_y(-1)=\phi(1)=0,
\end{align}
\end{subequations}
where $g$ is the transformation of the measure $m$ induced by the Liouville transformation 
\eqref{eq:Liouville}.
Furthermore, as one can explicitly check, $g$ is also a finite signed measure and its support does not include the endpoints if the original measure $m$ is a finite signed measure.   More concretely, in this paper,
\begin{equation}\label{eq:g}
g=\sum_{i=1}^n g_i \delta_{y_i},    \quad -1<y_1<y_2<\dots<y_n<1,
\end{equation}
with weights $g_i\in \R$.
The inverse problem is studied with the help of two Weyl functions.  
\begin{definition} \label{def:WZ}
Let $\phi(y;z)$ denote the solution to the initial value problem \eqref{eq:acubicstring} with initial conditions $\phi(-1;z)=\phi_y(-1;z)=0, \, \phi_{yy}(-1;z)=1$.  The Weyl functions are ratios: 
\begin{equation*}
W(z)=\frac{\phi_y(1;z)}{\phi(1;z)}, \qquad Z(z)=\frac{\phi_{yy}(1;z)}{\phi(1;z)}.  
\end{equation*}
\end{definition}
These two functions encode spectral information needed to solve the inverse problem.
It is easy to verify that in the case of \eqref{eq:g}
both $W(z) $ and $Z(z)$ are rational functions which makes inversion algebraic.
However, in contrast to the pure peakon case $g_i>0$, the spectrum of the
boundary value problem \eqref{eq:acubicstring} is in general complex and
not necessarily simple.  This makes the inversion more challenging.
Regardless of the complexity of the spectrum though the Weyl functions undergo a simple evolution under the
DP flow.  Indeed,
using the second member of the Lax pair given by \eqref{eq:Lax} one can
find the time evolution of $W(z)$ and $Z(z)$.  To wit, using results
from Theorem 2.3 in \cite{sz1}
we obtain the following characterization of the time evolution of $W(z)$
and $Z(z)$.
\begin{theorem}
Let
\begin{equation*}
\frac{W(z)}{z}=\sum_j \sum_{k=1}^{d_j} \frac{b_j^{(k)}}{(z-\lambda_j)^k} +\frac1z,
\end{equation*}
be the partial fraction decomposition of $\frac{W(z)}{z}$, where $d_j$ denotes the algebraic degeneracy of the $j$-th eigenvalue.  Then the
DP time evolution implies:
\begin{enumerate}
\item[(1)]
\begin{equation*}
b^{(k)}_j=p^{(k)}_j(t)e^{\frac{t}{\lambda_j}},
\end{equation*}
where $p^{(k)}_j(t)$ is a polynomial in $t$ of degree $d_j-k$.
\item[(2)]$M_+\stackrel {\rm {def}} {=}\sum_{k=1}^n m_k e^{x_k}=
\sum_j \dot b^{(1)}_j$.
\item[(3)] \begin{equation*}
\dot W=\frac{W-1}{z} +M_+, \quad \dot Z=(W-1)M_++\dot W
\end{equation*}

\end{enumerate}
\end{theorem}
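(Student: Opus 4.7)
The plan is to establish item (3) directly from the Lax pair and then extract items (1) and (2) as algebraic consequences. For item (3), I would substitute $\Psi = 2\phi/(1-y^2)$ with $y = \tanh(x/2)$ into the second member of the Lax pair $\Psi_t = [z^{-1}(1-\partial_x^2) + u_x - u\partial_x]\Psi$, then add a compensating multiple of the spectral equation so that the induced flow on $\phi$ preserves the normalization $\phi(-1) = \phi_y(-1) = 0$, $\phi_{yy}(-1) = 1$. Evaluating at $y=1$ (i.e.\ past the rightmost peakon, where the support of $m$ ends) and using the explicit asymptotic $u(x) \sim e^{-x}M_+$ with $M_+ = \sum_k m_k e^{x_k}$ yields closed formulas for $\dot\phi(1;z)$, $\dot\phi_y(1;z)$, $\dot\phi_{yy}(1;z)$. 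The quotient rule applied to the definitions of $W$ and $Z$ then delivers the two ODEs in (3). This calculation is already carried out in Theorem~2.3 of \cite{sz1}, so my plan is to invoke that result rather than redo the manipulation.

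For item (1), I would read off the principal part of the ODE $\dot W = (W-1)/z + M_+$ at each pole $\lambda_j$. Using the identity $(W-1)/z = W/z - 1/z = \sum_{j,k} b_j^{(k)}/(z-\lambda_j)^k$, multiplying the ODE through by $z$, and expanding $z = \lambda_j + (z-\lambda_j)$ inside each pole term, matching coefficients of $1/(z-\lambda_j)^m$ produces the triangular system
\[
\lambda_j\,\dot b_j^{(m)} + \dot b_j^{(m+1)} = b_j^{(m)}, \qquad m=1,\dots,d_j,
\]
with the convention $b_j^{(d_j+1)} \equiv 0$. The top equation $\dot b_j^{(d_j)} = b_j^{(d_j)}/\lambda_j$ gives $b_j^{(d_j)}(t) = b_j^{(d_j)}(0)\,e^{t/\lambda_j}$, a degree-$0$ polynomial times the exponential. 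An induction on decreasing $m$, at each step solving a first-order linear ODE whose inhomogeneity is a polynomial of degree $d_j - m - 1$ times $e^{t/\lambda_j}$, yields the stated form $b_j^{(m)}(t) = p_j^{(m)}(t)\,e^{t/\lambda_j}$ with $\deg p_j^{(m)} = d_j - m$.

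For item (2), I would examine the $z \to \infty$ behavior of the same ODE. The partial fraction expansion gives $W(\infty,t) = 1 + \sum_j b_j^{(1)}(t)$, since after multiplication by $z$ and passage to infinity only the simple-pole coefficients survive. In $\dot W = (W-1)/z + M_+$, the first term on the right vanishes as $z \to \infty$ because $W$ is bounded there, while the second term is independent of $z$, yielding $\sum_j \dot b_j^{(1)} = M_+$. The main obstacle in the whole argument lies in item (3): tracking the Liouville transformation through a third-order differential operator and bookkeeping the normalization correction needed to preserve the conditions at $y = -1$ is delicate, and identifying $M_+$ as the correct nonlocal ingredient reflecting the rightward asymptotic of $u$ requires care. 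Once (3) is secured, items (1) and (2) reduce to routine partial-fraction and pole-matching arithmetic.
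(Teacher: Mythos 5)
Your proposal is correct and follows essentially the same route as the paper, which offers no argument beyond invoking Theorem~2.3 of \cite{sz1} for the time evolution coming from the second member of the Lax pair \eqref{eq:Lax}. Your additional pole-matching bookkeeping — the triangular system $\lambda_j\,\dot b_j^{(m)}+\dot b_j^{(m+1)}=b_j^{(m)}$ giving item (1) and the $z\to\infty$ limit giving item (2) — is sound (using isospectrality so that the $\lambda_j$, $d_j$ are fixed in $t$) and simply fills in details the paper leaves implicit.
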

We immediately have:
\begin{corollary}\label{cor:WZflow} 
Under the DP flow the Weyl functions $W,Z$ are entire
functions of time.
\end{corollary}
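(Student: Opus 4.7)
The plan is to read the corollary essentially off part (1) of the preceding theorem and combine it with the linear evolution equations in part (3). For $W$ I would start from the partial fraction representation
\begin{equation*}
W(z) = 1 + z\sum_{j}\sum_{k=1}^{d_j} \frac{p_j^{(k)}(t)\, e^{t/\lambda_j}}{(z-\lambda_j)^k},
\end{equation*}
and observe two things: the eigenvalues $\lambda_j$ are time-independent by the isospectrality of the cubic string under the DP flow, and each numerator $p_j^{(k)}(t)\, e^{t/\lambda_j}$ is a polynomial times an exponential, hence entire in $t$. For any fixed $z$ outside the finite set $\{\lambda_j\}$, the denominators are nonzero constants (in $t$), so $W(z,\cdot)$ is a finite $\C$-linear combination of entire functions of $t$ and is therefore itself entire in $t$.

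For $Z$ I would use parts (2) and (3). By part (2), $M_+ = \sum_j \dot b_j^{(1)}$, so $M_+$ is a finite sum of time derivatives of entire functions of $t$ and is therefore entire in $t$. Part (3) then gives $\dot Z = (W-1)M_+ + \dot W$, whose right-hand side is entire in $t$ for each fixed $z$ in the common domain of definition. Since the initial value $Z(z,0)$ is finite away from the (finitely many) zeros of $\phi(1;z)|_{t=0}$, integrating from $0$ to $t$ yields
\begin{equation*}
Z(z,t) = Z(z,0) + \int_0^t \dot Z(z,s)\, ds,
\end{equation*}
which exhibits $Z(z,\cdot)$ as an entire function of $t$.

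There is really no substantial obstacle; the corollary is a bookkeeping consequence of the explicit $t$-dependence extracted in the theorem. The one point requiring vigilance is the essential use of isospectrality: were the $\lambda_j$ allowed to drift with $t$, the factor $e^{t/\lambda_j}$ would lose its meaning as an entire function of $t$, and the argument for $W$ would collapse. Time-invariance of the spectrum of the boundary value problem \eqref{eq:cubicstring} is exactly what makes the representation in part (1) a genuinely closed-form, manifestly entire function of $t$, and everything else is linear algebra on top of that.
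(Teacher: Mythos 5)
Your argument is correct and is exactly the reasoning the paper leaves implicit when it states the corollary as an immediate consequence of the preceding theorem: $W$ is entire in $t$ because its partial fraction data $b_j^{(k)}(t)=p_j^{(k)}(t)e^{t/\lambda_j}$ are entire and the $\lambda_j$ are time-independent by isospectrality, and $Z$ then follows from parts (2) and (3) by integrating $\dot Z=(W-1)M_++\dot W$. No discrepancies with the paper's approach.
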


The uniqueness result below plays a major role in the solution to the inverse problem.  
\begin{theorem} 
Suppose $\Phi: g\rightarrow \{W(z), Z(z)\}$ is the map that associates 
to the cubic string problem \eqref{eq:cubicstring} with a finite signed
measure $g$, the Weyl functions $W(z), Z(z)$.  
Then $\Phi$ is injective.  
\end{theorem}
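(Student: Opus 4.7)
The plan is to identify the rightmost mass $(y_n,g_n)$ algebraically from the leading coefficients of certain polynomials derived from $W$ and $Z$, and then induct on $n$.

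First I would recover three polynomials $A(z):=\phi(1;z),\,B(z):=\phi_y(1;z),\,C(z):=\phi_{yy}(1;z)$, each a polynomial in $z$ of degree exactly $n$ obtained by propagating the initial data $(0,0,1)$ at $y=-1$ through the $n$ jumps and the final quadratic extension to $y=1$. A common zero $z_0$ of these three would force $\phi(\cdot;z_0)$ to vanish to second order at $y=1$, hence identically on $(y_n,1)$; the jump relation $\phi_{yy}(y_i^-)-\phi_{yy}(y_i^+)=zg_i\phi(y_i)$ then propagates the vanishing backwards through each $y_i$, contradicting $\phi_{yy}(-1;z_0)=1$. Thus $\gcd(A,B,C)=1$, so $(A,B,C)$ is recovered from $(W,Z)=(B/A,C/A)$ up to a common scalar, fixed by the normalization $A(0)=2$ coming from the explicit $z=0$ solution $\phi(y;0)=(y+1)^2/2$.

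Next I would identify the rightmost mass from the leading coefficients. A direct induction on the number of jumps gives
\[
[z^n]A=-\tfrac{(1-y_n)^2}{2}g_n\gamma,\qquad [z^n]B=-(1-y_n)g_n\gamma,\qquad [z^n]C=-g_n\gamma,
\]
where $\gamma:=[z^{n-1}]\phi(y_n;z)$ admits the explicit product $\gamma=\tfrac{(y_1+1)^2}{2}\left(-\tfrac12\right)^{n-1}\prod_{k=1}^{n-1}g_k(y_{k+1}-y_k)^2\ne 0$. One then reads off $y_n=1-[z^n]B/[z^n]C$, computes $\gamma$ from $A,B,C,y_n$ via $\gamma=[z^{n-1}](A+(y_n-1)B+\tfrac{(y_n-1)^2}{2}C)$, and solves $g_n=-[z^n]C/\gamma$. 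To complete the induction, the triple associated with the cubic string with measure $g-g_n\delta_{y_n}$ is
\[
\tilde A=A+\tfrac{(1-y_n)^2}{2}zg_n\phi(y_n;z),\quad \tilde B=B+(1-y_n)zg_n\phi(y_n;z),\quad \tilde C=C+zg_n\phi(y_n;z),
\]
each of degree $n-1$ by the leading-coefficient formulas. The corresponding Weyl functions $\tilde W=\tilde B/\tilde A,\,\tilde Z=\tilde C/\tilde A$ then fall under the inductive hypothesis, yielding $(y_i,g_i)$ for $i<n$; the base case $n=0$ gives $W\equiv 1,Z\equiv 1/2$.

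The main obstacle is the nonvanishing $\gamma\ne 0$: without it the leading-coefficient peeling collapses. This is exactly where the signed case departs from the pure peakon case of \cite{ls-cubicstring}, which used positivity-based continued-fraction uniqueness. Here the explicit product formula for $\gamma$ establishes nonvanishing uniformly in the signed case, letting the inductive argument go through without any spectral simplicity or sign assumptions on $g$.
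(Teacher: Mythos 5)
Your proposal is correct, but it is implemented differently enough from the paper's proof that a comparison is worthwhile. The paper never reconstructs numerator and denominator separately: it runs the continued-fraction-style recursion \eqref{eq:convergents} on the interlaced Weyl-type ratios $(w_j,z_j)$ attached to successively shorter strings, and recovers $l_j$ and $g_j$ step by step from the large-$z$ asymptotics \eqref{eq:seqasmpt} via the formulas \eqref{eq:recovery}; the nondegeneracy driving this is the leading behaviour \eqref{eq:lphij}, i.e.\ the nonvanishing product $\prod_k g_k l_{k-1}^2$. You instead reconstruct the exact polynomial triple $(A,B,C)=\bigl(\phi(1;z),\phi_y(1;z),\phi_{yy}(1;z)\bigr)$ from $(W,Z)$ --- the coprimality-plus-normalization step, which the paper does not need --- and then strip one mass at a time by exact leading-coefficient identities; your $\gamma$ and its product formula are precisely the $j=n-1$ case of \eqref{eq:lphij}, so both proofs ultimately rest on the same nonvanishing quantity, and both work in the signed case without any positivity or spectral simplicity. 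What your route buys: it is purely algebraic (no asymptotic expansions of ratios), it gives closed formulas for the boundary data $(\tilde A,\tilde B,\tilde C)$ of the peeled string, and it makes the determination of $n$ from the data explicit ($n=\deg A$). What the paper's route buys: working with the ratios avoids the up-to-scalar ambiguity entirely (no gcd argument needed), and the same recursion \eqref{eq:convergents} is reused later, in the proof of Theorem \ref{thm:merocentral}, to conclude that the reconstructed data are meromorphic in $t$. Two small points you should make explicit: the claim that $(A,B,C)$ with $\gcd(A,B,C)=1$ is determined by $(W,Z)$ up to a common scalar requires the (standard) observation that $A$ is, up to a constant, the least common multiple of the denominators of $W$ and $Z$ in lowest terms; and, as in the paper, the argument tacitly assumes all $g_i\neq 0$, i.e.\ that the listed points are genuinely in the support of $g$. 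With those lines added your induction is complete.
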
 
\begin{proof} 
The proof relies on remarks made in \cite{ls-invprob}.  We will construct 
a recursive scheme to solve the inverse spectral problem; given $W$ and $Z$ 
obtained from the map $\Phi$ we will reconstruct the finite, signed measure $g$ whose Weyl functions are $W$ and $Z$.  
More precisely, we will show that  the $y_j$'s and $g_j$'s in equation \eqref{eq:g} 
are uniquely determined from $W(z), Z(z)$.  First we recall that $W$ and $Z$ are 
constructed from solutions to the initial value problem (see Definition \ref{def:WZ}) 
\begin{subequations}\label{eq:stringIVP}
\begin{align}
-\phi_{yyy}(y)&=zg(y)\phi(y), \qquad -1<y<1, \\
\phi(-1)&=\phi_y(-1)=0, \quad \phi_{yy} (-1)=1
\end{align}
\end{subequations}
Masses $g_j$ are situated at $y_j, \, 1\leq j\leq n$ and for convenience let us set $y_0=0,\,  y_{n+1}=1$ and denote by $l_j=y_{j+1}-y_j$ the length of the interval $(y_j, y_{j+1})$.  
Then on each interval $(y_j, y_{j+1})$ 
the solution to (\ref{eq:stringIVP}) takes the form 
\begin{equation*}
  \phi(y) = \phi(y_{j+1}) + \phi_y(y_{j+1}) \, (y-y_{j+1})
            + \phi_{yy}(y_{j+1}-) \, (y-y_{j+1})^2/2,  \quad 0\leq j\leq n
\end{equation*}
and the condition of crossing $y_{j+1}$ is: continuity of $\phi$ and $\phi_y$ and 
the jump condition $\phi_{yy}(y_{j+1}+)-\phi_{yy}(y_{j+1}-)=-zg_{j+1} \phi(y_{j+1})$.  We establish, 
for example by an easy induction, 
\begin{equation} \label{eq:lphij} 
\left(\phi(y_{j+1}), \phi_y(y_{j+1}), \phi_{yy}(y_{j+1}-)\right)=(-z)^j \prod_{k=1}^j \frac{g_k l^2_{k-1}}{2}\,  \left(l^2_j/2, l_j, 1\right)+O(z^{j-1}), 
\end{equation} 
valid for $0\leq j\leq n$, with the convention that for $j=0$ the product equals $1$ and there is no remainder.  Likewise, 
\begin{equation}\label{eq:rphij} 
\left(\phi(y_{j}), \phi_y(y_{j}), \phi_{yy}(y_{j}+)\right)=(-z)^j \prod_{k=1}^j \frac{g_k l^2_{k-1}}{2}\,  \left(0, 0, 1\right)+O(z^{j-1}), 
\end{equation}
valid for $1\leq j\leq n$.

For $0\le j\le n$ we define 
$(w_{2j},z_{2j})= 
(\frac{\phi_y}{\phi} ,\frac{\phi_{yy}}{\phi})|_{y=y_{j+1}-}$
 and $(w_{2j-1},z_{2j-1})= 
(\frac{\phi_y}{\phi_{yy}},\frac{\phi}{\phi_{yy}})|_{y=y_j +}$.  
These quantities are essentially the left hand and the right hand analogs of 
Weyl functions introduced in Definition \ref{def:WZ} and correspond to shorter strings terminating at $y_{j+1}$ with no mass at the endpoint, or terminating at $y_j$ but with the mass $g_j$ at the end.  
Equation \eqref{eq:stringIVP} implies that the sequence $(w_{2j},z_{2j}, w_{2j-1},z_{2j-1})$ satisfies the recurrence relations 
  \begin{align} \label{eq:convergents}
    w_{2j-1}= \frac{w_{2j}}{z_{2j}}-l_j,
    \qquad 
    &
    z_{2j-1}=\frac{1}{z_{2j}}-l_j \frac{w_{2j}}{z_{2j}}+\frac{l_j^2}{2},
    \\
    w_{2j-2}= \frac{w_{2j-1}}{z_{2j-1}},
    \qquad 
    &
    z_{2j-2}=\frac{1}{z_{2j-1}}+zg_j; 
\end{align}
the iteration starts at $w_{2n}=W(z), z_{2n}=Z(z)$
and terminates at $w_{-1}, z_{-1}$. 
Moreover, based on equations \eqref{eq:lphij} and \eqref{eq:rphij}, we easily establish 

\begin{equation}\label{eq:seqasmpt} 
w_{2j}=\frac{2}{l_j} +O(\frac1z),\quad  z_{2j}=\frac{2}{l_j^2}+O(\frac1z), \quad w_{2j-1}=z_{2j-1}=O(\frac1z),  \quad \text{as } z\rightarrow \infty, 
\end{equation}
which implies that the quantities $\{l_j, g_j\}$ are 
determined in each step from the large $z$ asymptotics of terms known 
from the previous step.   Indeed, if we  
denote by $a^{(m)}$ the coefficient of $z^{-m}$
in the expansion of a 
holomorphic function $a(z)$ at $z=\infty$ we obtain the recovery formulas 
\begin{equation}
  \label{eq:recovery}
  l_j=\frac{2}{w_{2j}^{(0)}},
  \qquad
  g_j=-\frac{1}{z_{2j-1}^{(1)}}.  
\end{equation}
Thus we proved that given a pair of Weyl functions $W(z), Z(z)$ obtained from 
a cubic string problem \eqref{eq:stringIVP} with a finite, signed measure $g$,  
there exists a unique solution to the recurrence relations \eqref{eq:convergents} subject to \eqref{eq:seqasmpt} and thus a unique cubic string corresponding to $W(z), Z(z)$.  
\end{proof} 

We are now ready to state the central theorem of this section
\begin{theorem} \label{thm:merocentral} 
Let $\{x_j(t), m_j(t)\}, j=1, \dots, n$ be the positions and masses of the peakon ansatz \eqref{eq:ansatz} corresponding to an arbitrary signed measure $m=2 \sum_{j=1}^n m_j \delta _{x_j}$, satisfying peakon equations \eqref{eq:b-ode-short} on the time interval $(0,t_c)$ and suppose that a collision occurs 
at $t_c$.  Then the positions $x_1(t)\dots, x_n(t)$ are analytic functions at $t_c$, while the masses $m_1(t)\dots m_n(t)$ are given by meromorphic functions at $t_c$.  
\end{theorem}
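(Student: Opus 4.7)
The plan is to pass to the cubic string via the Liouville transformation \eqref{eq:Liouville}, invert the spectral problem using the recursion developed in the proof of the preceding theorem, and exploit the entire $t$-dependence of $W(z), Z(z)$ given by Corollary~\ref{cor:WZflow}. Peakon positions $x_j(t)$ correspond to $y_j(t) = \tanh(x_j(t)/2) \in (-1,1)$, and peakon masses $m_j(t)$ correspond via the Liouville transformation to cubic-string weights $g_j(t)$; it therefore suffices to prove that each $y_j$ is analytic and each $g_j$ is meromorphic at $t_c$.

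The explicit flow $b_j^{(k)}(t) = p_j^{(k)}(t)e^{t/\lambda_j}$ for the partial-fraction coefficients of $W(z)/z$, together with the analogous formula for $Z$, shows that the Laurent coefficients $W^{(k)}(t), Z^{(k)}(t)$ in the expansions at $z=\infty$ are entire (quasi-polynomial) functions of $t$. Running the recurrence \eqref{eq:convergents} backward from $(w_{2n}, z_{2n}) = (W, Z)$ together with the recovery formulas \eqref{eq:recovery}, I would prove by induction on $j$ (decreasing from $n$) that the Laurent coefficients in $z$ of $w_{2j}(z;t), z_{2j}(z;t), w_{2j-1}(z;t), z_{2j-1}(z;t)$ are meromorphic in $t$ near $t_c$, and hence that $l_j(t) = 2/w_{2j}^{(0)}(t)$ and $g_j(t) = -1/z_{2j-1}^{(1)}(t)$ are meromorphic. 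Each inductive step performs only rational operations on the Laurent coefficients, and the denominators appearing in those divisions have nonvanishing leading terms by the asymptotic normalization \eqref{eq:seqasmpt}, so meromorphicity is preserved.

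To upgrade each $l_j$ from meromorphic to analytic at $t_c$, I would use the boundedness $l_j(t) \in [0,2]$ on the real axis: a meromorphic function bounded on a real interval containing $t_c$ cannot have a pole at $t_c$. Hence each $l_j$ is analytic at $t_c$, and since each $y_j$ is a partial sum of the $l_k$'s starting from the left endpoint $-1$, each $y_j(t)$ is analytic at $t_c$. Because collisions are interior events ($y_j = y_{j+1}$ in $(-1,1)$), $y_j(t_c) \in (-1,1)$ strictly, so $x_j(t) = 2\,\mathrm{arctanh}(y_j(t))$ is analytic at $t_c$. Finally, the Liouville formula expressing $m_j$ as an analytic function of $g_j$ and $y_j$ (with $|y_j|<1$) transfers meromorphicity of $g_j$ into meromorphicity of $m_j$.

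I expect the main obstacle to be the bookkeeping in the inductive step of the second paragraph: one must verify that repeated rational operations in \eqref{eq:convergents}, mixing multiplication, division and subtraction of rational functions of $z$ with $t$-dependent coefficients, preserve meromorphicity in $t$ at every Laurent coefficient — in particular that no denominator becomes identically zero in $t$. The asymptotic normalization \eqref{eq:seqasmpt} is what keeps this bookkeeping under control, and the physical boundedness of the lengths is what converts the resulting meromorphicity of $y_j$ into genuine analyticity.
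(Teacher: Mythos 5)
Your proposal follows essentially the same route as the paper: Liouville transformation to the cubic string, entire time dependence of the Weyl functions (Corollary~\ref{cor:WZflow}), inversion via the finite recursion \eqref{eq:convergents} and \eqref{eq:recovery} to get $l_j(t)$, $g_j(t)$ meromorphic at $t_c$, boundedness of the lengths ($0\le l_j\le 2$, the string being finite) to exclude poles of $l_j$ and hence get analyticity of the $y_j$, and finally the Liouville relation $m_j=\frac{(1-y_j^2)^2}{8}\,g_j$ to transfer meromorphicity of $g_j$ to $m_j$. The bookkeeping you flag as the main obstacle (rational operations in $t$, denominators not identically zero because for each real $t<t_c$ the Weyl functions do come from an actual string with positive lengths) is exactly what the paper's proof also relies on.

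The one genuine thin spot is your dismissal of the endpoint case: you assert that $y_j(t_c)\in(-1,1)$ strictly ``because collisions are interior events.'' That phrase only speaks to the colliding pair, and even for that pair it presupposes that the common limit of the positions is finite; it says nothing about the remaining indices $j$. Since the colliding masses blow up like $\pm\frac{1}{2(t-t_c)}$, the velocities $\dot x_k=u(x_k)$ of the other peakons are, a priori, differences of unboundedly large terms, so one cannot rule out by inspection that some $x_k(t)\to\pm\infty$ (equivalently $y_k(t_c)=\pm 1$) as $t\to t_c^-$, which would put $y_k(t_c)$ at a singular point of $x=2\,\mathrm{arctanh}(y)$ and destroy the analyticity claim for that $x_k$. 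The paper closes this step by invoking the result from \cite{sz1} that no peakon can escape to $\pm\infty$ in finite time; your argument needs either that citation or an independent bound on the positions (e.g.\ via the conserved quantities or an a priori estimate on $u$ along the trajectories) before the conclusion $y_j(t_c)\in(-1,1)$ for all $j$ is justified.
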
 
\begin{proof} Given the initial conditions $\{x_1(0)<x_2(0)<\dots< x_n(0)\}$ and \\
 $\{m_1(0), m_2(0), \dots, m_n(0)\}$ we set up the string problem \eqref{eq:acubicstring} after mapping $m(0)$ to $g(0)$.  This produces the Weyl functions $W(0), Z(0)$, which under the peakon flow evolve 
as entire functions of time in view of Corollary  \ref{cor:WZflow}.  We then set up the recursive scheme \eqref{eq:convergents} with $W(t), Z(t)$ as inputs.  At each stage of 
recursion only rational operations are involved and since the recursion is finite the formulas 
\eqref{eq:recovery} result in functions meromorphic in $t$.  Thus all $g_j, y_j$ are meromorphic in $t$.  For $t<t_c$ all distances $l_j>0$ and at $t_c$ some 
$l_i$ vanishes but all $l_j$ remain finite, because this is a finite string.  Hence $l_j(t)$ is regular at $t_c$ hence analytic there.  For a signed measure $g$ there are no bounds 
restrictions on individual $g_j$ so in general $g_j$ remains meromorphic at $t_c$.  
Mapping back to the real axis is afforded by $y=\tanh\frac x2 $; hence positions of 
individual masses are given by $x_j=\ln \frac{y_j+1}{y_j-1}$.  The only singular points of this map are for $y_j=\pm 1$ which means the end of the string or, after mapping the problem back to the 
real axis, $\pm \infty$.  However, based on results in \cite{sz1}, none of the masses can 
escape to $\pm \infty$ in finite time. So $ \frac{y_j+1}{y_j-1}$ is in the domain of analyticity of $\ln$ and hence the $x_j$s are analytic at $t_c$.  The relation between the measures $m$ and $g$ appearing in equations \eqref{eq:Lax} and \eqref{eq:acubicstring} is given 
by $m_j=\frac{(1-y_j^2)^2}{8} g_j$ which implies the claim since $g_j$ is meromorphic and 
$y_j$ analytic.  \end{proof} 

The above theorem establishes that the only singular points of solutions to the peakon 
ODE system \eqref{eq:b-ode-short-x} and \eqref{eq:b-ode-short-m} are poles.  
Since the inverse problem argument is valid for a fixed ordering $x_1<x_2<\cdots<x_n$ of 
masses,  the analytic continuation of masses and positions into the complex domain in $t$ will satisfy 
equations \eqref{eq:b-ode-short-x} and \eqref{eq:b-ode-short-m} in which $\sgn(x_k-x_i), e^{-\abs{(x_k-x_i)}}$ are 
replaced with $\sgn(k-i)$, $e^{-\sgn(k-i) (x_k-x_i)}$ respectively, to be consistent with the original ordering.  
It is for these equations that we note the {\sl absence of movable critical points} also known as {\sl Painlev\'{e} property} \cite{Ince}.  To facilitate the statement of the last theorem of this 
section we set $X_i=e^{x_i}, \, 1\leq i\leq n$ and rewrite the system \eqref{eq:b-ode-short-x} and \eqref{eq:b-ode-short-m} in new variables $\{m_i, X_i\}$.  
\begin{theorem} [Painlev\'{e} property] \label{thm:Pain} The system of 
differential equations 
  \begin{equation*}
 \dot X_k = X_k\sum_{i=1}^n m_i\, \left(\frac{X_i}{X_k}\right)^{\sgn(k-i)}, 
  \, \quad 
 \dot m_k= 2\, m_k \sum_{i=1}^n m_i \sgn(k-i) \, \left(\frac{X_i}{X_k}\right)^{\sgn(k-i)}
  \end{equation*}
  has the Painlev\'{e} property.  
\end{theorem}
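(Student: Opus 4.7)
The plan is to upgrade the argument of Theorem \ref{thm:merocentral} from real collision times $t_c$ to arbitrary complex times, exploiting the fact that the inverse spectral reconstruction consists of finitely many rational operations applied to data which are entire or meromorphic in $t$. Let $\{X_i(t), m_i(t)\}$ denote a local analytic solution of the system near some real initial time, with strict ordering $X_1(0) < \cdots < X_n(0)$. The goal is to show that this local solution admits a meromorphic continuation to all of $\C$, which is precisely the statement of the Painlev\'{e} property.

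First, by Corollary \ref{cor:WZflow}, for each fixed $z$ the Weyl functions $W(z,t), Z(z,t)$ extend to entire functions of $t \in \C$; this is read off directly from the theorem preceding Corollary \ref{cor:WZflow}, where the partial fraction coefficients $b_j^{(k)}(t) = p_j^{(k)}(t) e^{t/\lambda_j}$ are polynomial-exponentials in $t$, and the eigenvalues $\{\lambda_j\}$ are time independent, so $W, Z$ retain their rational structure in $z$ at every complex $t$. Second, I would run the recursion \eqref{eq:convergents} starting from $w_{2n} = W(z, t), \ z_{2n} = Z(z, t)$ and extract $l_j(t), g_j(t)$ via the recovery formulas \eqref{eq:recovery}. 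Every operation is rational in $z$ with coefficients meromorphic in $t$, and the coefficients in the $z \to \infty$ expansions extracted at each stage are rational expressions in coefficients of earlier meromorphic-in-$t$ series, so the outputs $l_j(t), g_j(t)$ are meromorphic functions of $t \in \C$. Summing, $y_j(t) = y_0 + \sum_{k < j} l_k(t)$ is meromorphic, and the explicit relations $X_j = \frac{1+y_j}{1-y_j}$ and $m_j = \frac{(1-y_j^2)^2}{8} g_j$ then give $X_j(t), m_j(t)$ meromorphic in $t$. Since on a real neighborhood of the initial time these meromorphic functions agree with the Cauchy-Picard solution (by Theorem \ref{thm:merocentral}), uniqueness of analytic continuation identifies the two: the ODE solution extends meromorphically to all of $\C$, so its only movable singularities are poles.

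The main obstacle is to verify that the reconstruction runs cleanly at complex $t$. At a real time, positivity of $l_j$ forces the leading asymptotic coefficient $w_{2j}^{(0)} = 2/l_j$ in \eqref{eq:seqasmpt} to be finite and nonzero, and the recursion proceeds unambiguously; at complex $t$ the coefficients $w_{2j}^{(0)}(t)$ and $z_{2j-1}^{(1)}(t)$ are only meromorphic in $t$, and the recovery formulas \eqref{eq:recovery} produce poles of $l_j$ where $w_{2j}^{(0)}$ vanishes and poles of $g_j$ where $z_{2j-1}^{(1)}$ vanishes, both of which are compatible with meromorphy in $t$. One must also verify that the rational-in-$z$ structure of $W, Z$ fed into the recursion persists under complex-$t$ continuation, which follows because the $z$-pole locations are the conserved eigenvalues $\lambda_j$. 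With these points checked, the Painlev\'{e} property is established.
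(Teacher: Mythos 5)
Your machinery is the same as the paper's: Weyl functions entire in $t$ (Corollary \ref{cor:WZflow}), the finite rational recursion \eqref{eq:convergents} with recovery formulas \eqref{eq:recovery} producing $l_j(t), g_j(t)$ meromorphic in $t$, and the explicit relations giving $X_j, m_j$ meromorphic, exactly as in the proof of Theorem \ref{thm:merocentral}. The paper's proof of Theorem \ref{thm:Pain} is essentially a one-paragraph repackaging of that argument, so up to this point you are reproducing the intended route.

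However, there is a gap in your final logical step, and it concerns what the Painlev\'{e} property actually asserts. You fix a local solution with \emph{real, strictly ordered} initial data $X_1(0)<\cdots<X_n(0)$ and real masses, and show that this particular solution continues meromorphically to $\C$; you then declare this ``precisely the statement of the Painlev\'{e} property.'' It is not: the Painlev\'{e} property (absence of movable critical points, in the sense of Ince) is a statement about the \emph{general solution} of the system, i.e.\ a full $2n$-parameter family with the constants ranging over an open set of $\C^{2n}$. Your argument only covers the real $2n$-dimensional slice of initial conditions reachable by genuine peakon data (equivalently, spectral data in the image of real ordered configurations); meromorphy of every member of a real-parameter subfamily does not by itself exclude movable branch points or essential singularities for nearby complex initial data, which is exactly what the theorem claims to rule out. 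The paper closes this by observing that the inverse-problem formulas are meromorphic in $t$ and depend on $2n$ constants (the pole positions and residues of $W$), that these constants fill an open subset of $\C^{2n}$ by continuity of the inverse spectral map under the ordering condition, and that relaxing that condition yields a solution depending on $2n$ \emph{arbitrary} constants -- a general solution meromorphic in $t$ on all of $\C$. To repair your proposal you would need an analogous step: argue that the reconstruction formulas remain meromorphic in $t$ when the spectral constants are taken as free complex parameters, and that the resulting family exhausts (an open dense set of) initial conditions of the system, so that the solution you continue is indeed the general one.
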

\begin{proof} 
First we observe (using the variables of the proof of Theorem \ref{thm:merocentral}) 
that $X_i=\frac{y_i+1}{y_i-1}$, hence $X_i$s are meromorphic in $t$ because so are $y_i$s.  
The formulas for $X_i$s and $m_i$s obtained from the inverse problem are 
meromorphic in $t$ in the complex plane $\C$ and depend on $2n$ constants (spectral data 
consisting, in the generic case, of $n$ positions of poles  and $n$ residues of the Weyl function 
$W$),
which for the cubic string problem, in view of the ordering condition, are confined to an open set in $\C^{2n}$ by continuity of the inverse spectral map.  Relaxing that condition results in 
a solution depending on $2n$ arbitrary constants which comprises a general solution which is 
meromorphic in $t$ in the whole complex plane $\C$.  
\end{proof}  
In the remainder of the paper we will concentrate on the specific singularity structure 
at the time of collisions of peakons.  
\section{Blow-up behaviour}\label{sec:blowup}

We now proceed to establish several theorems on peakon collisions for DP equation. To begin with we recall the definition of a peakon collision briefly discussed in the introduction. We call $t_c$ the \textit{collision} time if there exists some $i$ such that
\begin{equation}\label{def:collision}
\lim_{t\to t_c^-}x_i(t)=\lim_{t\to t_c^-}x_{i+1}(t),
\end{equation} where $x_i(t)$s are the position functions in the ansatz \eqref{eq:ansatz}. Equivalently, we say that the $i$-th peakon collides with the $(i+1)$-th peakon at the time $t_c$. If there exist more than two position functions being identical at $t_c$ then we will say that a \textit{multiple collision} happens at $t_c$.

In this section, we describe the behaviour of the peakon dynamical system \eqref{eq:b-ode-short} in the neighbourhood of a collision time $t_c$.

To this end we need to study a special skew-symmetric $n\times n$ real matrix $A_n$ given by
\begin{equation}\label{eq:MatrixForm}
A_n=\left[\sgn(i-j)a_{ij}\right]
\end{equation}
whose entries satisfy $a_{ij}=a_{ji}$ and 
\begin{equation}\label{eq:MatrixCon}
a_{ij}=a_{il}a_{lj}\neq0,\quad\text{for all $0<i<l<j<n$.}
\end{equation} The following propositions hold for such a matrix.  
\begin{lemma}There exists a matrix $P$ with $\det P=1$ such that $P^TA_nP=B_n$, where
\[B_n=\left[\begin{array}{ccccccc}
0&-a_{12}\\
a_{12}&0\\
&&0&-a_{34}\\
&&a_{34}&0\\
&&&&\ddots\\
&&&&&0&-a_{n-1\,n}\\
&&&&&a_{n-1\,n}&0
\end{array}\right],\quad \text{if $n$ is even},\] or \[B_n=\left[\begin{array}{cccccccc}
0\\
&0&-a_{23}\\
&a_{23}&0\\
&&&0&-a_{45}\\
&&&a_{45}&0\\
&&&&&\ddots\\
&&&&&&0&-a_{n-1\,n}\\
&&&&&&a_{n-1\,n}&0
\end{array}\right],\quad \text{if $n$ is odd}.\]
\end{lemma}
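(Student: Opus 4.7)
The plan is to argue by induction on $n$, peeling off one $2\times 2$ pivot block at a time via classical skew-symmetric congruence reduction. The base cases are immediate: $A_1=[0]=B_1$, and $A_2=B_2$ with $P=I$.

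For the inductive step $n\geq 3$ I take the nonzero pivot $(A_n)_{n-1,n}=-a_{n-1,n}$ and eliminate the remainder of rows and columns $n-1,n$ via unit-triangular elementary congruences. For each $k=1,\dots,n-2$ apply, in order, the two factors
\begin{equation*}
P_k^n(\lambda_k)=I+\lambda_k E_{n,k},\qquad P_k^{n-1}(\mu_k)=I+\mu_k E_{n-1,k},
\end{equation*}
with $\lambda_k=a_{k,n-1}/a_{n-1,n}$ and $\mu_k=-a_{k,n-1}$. Under the congruence $A\mapsto P^TAP$, the factor $P_k^j(\alpha)$ simultaneously adds $\alpha$ times row $j$ to row $k$ and $\alpha$ times column $j$ to column $k$. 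The choice of $\lambda_k$ kills the $(k,n-1)$ entry, and $\mu_k$ then kills the $(k,n)$ entry using the fact that after the first step the $(k,n)$ entry is still $-a_{k,n}=-a_{k,n-1}a_{n-1,n}$, while $(A_n)_{n-1,n-1}=0$ guarantees that the second step does not disturb the already-zeroed $(k,n-1)$.

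The crucial claim is that the leading $(n-2)\times(n-2)$ principal submatrix is left unchanged by each such pair of operations. Indeed, for $l\leq n-2$ with $l\neq k$, the combined correction to the $(k,l)$ entry is
\begin{equation*}
\lambda_k(A_n)_{n,l}+\mu_k(A_n)_{n-1,l}=\frac{a_{k,n-1}}{a_{n-1,n}}\,a_{l,n-1}\,a_{n-1,n}-a_{k,n-1}\,a_{l,n-1}=0,
\end{equation*}
the cancellation being precisely the cocycle identity $a_{l,n}=a_{l,n-1}a_{n-1,n}$ coming from \eqref{eq:MatrixCon}. Operations indexed by different $k$ touch only row/column $k$, and the entries $(A_n)_{n-1,l}$ and $(A_n)_{n,l}$ for $l\leq n-2$ are never modified during the sweep, so the $n-2$ pairs of operations decouple. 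After all of them, $A_n$ has been reduced by congruence to the block form
\begin{equation*}
A_{n-2}'\ \oplus\ \begin{pmatrix}0 & -a_{n-1,n}\\ a_{n-1,n}&0\end{pmatrix},
\end{equation*}
where $A_{n-2}'$ is the original leading $(n-2)\times(n-2)$ submatrix, which still satisfies \eqref{eq:MatrixForm}--\eqref{eq:MatrixCon} with the same data $\{a_{ij}\}_{i,j\leq n-2}$.

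The inductive hypothesis applied to $A_{n-2}'$ produces a reducing matrix $Q$ of size $n-2$ with $\det Q=1$; extending $Q$ by a $2\times 2$ identity in the last two positions and composing with the elementary factors above yields the required $P$. Since every factor is unit-triangular, $\det P=1$. Iterating, the pivot blocks emerge in the order $a_{n-1,n},\,a_{n-3,n-2},\dots$, terminating in $a_{12}$ when $n$ is even and in $a_{23}$ (with a residual $1\times 1$ zero block in the top-left) when $n$ is odd, matching the advertised form of $B_n$. The main obstacle is the sub-block preservation calculation displayed above: without the multiplicative hypothesis \eqref{eq:MatrixCon} the two eliminations for each $k$ would each perturb the leading submatrix independently, and the whole argument hinges on the fact that the two perturbations cancel exactly thanks to the cocycle structure of the $\{a_{ij}\}$.
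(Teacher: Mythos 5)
Your proof is correct and follows essentially the same route as the paper: an induction in steps of two that splits off the trailing $2\times 2$ block by a unit-determinant congruence, with condition \eqref{eq:MatrixCon} forcing the correction to the leading $(n-2)\times(n-2)$ block to vanish. Indeed, the product of your transvections $I+\lambda_k E_{n,k}$, $I+\mu_k E_{n-1,k}$ over $k$ is exactly the paper's block matrix $P_1=\left[\begin{smallmatrix} I_{n-2}&0\\ -C^{-1}B^T&I_2\end{smallmatrix}\right]$, and your entrywise cancellation is precisely the statement $BC^{-1}B^T=0$ used there.
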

\begin{proof}

  The conclusion is trivial for $n=1,2$. We assume the conclusion to hold for $n-2$; to show that it holds for $n$ we divide $A_{n}$ into four block submatrices 
  \begin{equation}\label{eq:An}
  A_{n}=\left[\begin{array}{cc}A_{n-2} & -B\\ B^T&C\end{array}\right],\end{equation}
   where $C=\left[\begin{array}{cc}0&
-a_{n-1\,n}\\a_{n-1\,n}&0\end{array}\right]$.   
Let us set $P_1=\left[\begin{array}{cc}I_{n-2}&0\\ -C^{-1}B^T&I_2\end{array}\right]$, 
then a direct computation shows that \[P_1^TA_{n} P_1=\left[\begin{array}{cc}A_{n-2}-BC^{-1}B^T& 0\\ 0&C\end{array}\right].\] In view of condition \eqref{eq:MatrixCon}, $B$ can be written as  $(\mathbf a,a_{n-1\,n}\mathbf a)$, where $\mathbf a =(a_{1\,n-1},a_{2\,n-1},\ldots,a_{n-2\,n-1})^T$.  It is now elementary to verify that $BC^{-1}B^T=0$. By the induction hypothesis there exists a matrix $P_2$ with $\det P_2=1$ such that $P_2^TA_{n-2} P_2=B_{n-2}, $ hence if we set
\[P=P_1\left[\begin{array}{cc}P_2& 0\\ 0&I_2\end{array}\right], \]  the conclusion follows.

\end{proof}

\begin{corollary}\label{lem:EvenDet}
If $n=2k$ then $\det A_{2k}=\displaystyle{\prod_{i=1}^ka_{2i-1\,2i}^2}>0$. If $n=2k+1$ then the rank of $A_{2k+1}$ is $2k$.
\end{corollary}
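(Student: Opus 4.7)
The plan is to deduce the corollary directly from the preceding lemma. That lemma supplies a matrix $P$ with $\det P = 1$ such that $P^T A_n P = B_n$. Since congruence by $P$ preserves both the determinant (because $\det P = 1$) and the rank, it is enough to analyze $B_n$, whose structure is fully explicit.

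For $n = 2k$ I will simply compute $\det B_n$ from the block-diagonal form: the matrix consists of $k$ skew-symmetric $2 \times 2$ blocks with off-diagonal entries $\pm a_{2i-1\,2i}$, each contributing a factor of $a_{2i-1\,2i}^2$, so $\det A_{2k} = \det B_n = \prod_{i=1}^k a_{2i-1\,2i}^2$. To upgrade this to the strict inequality in the statement I will invoke condition \eqref{eq:MatrixCon}: for any triple $i < l < j$ the identity $a_{ij} = a_{il}a_{lj} \neq 0$ forces both factors on the right to be nonzero, and applying this to suitably chosen triples establishes that every consecutive entry $a_{2i-1\,2i}$ is nonzero, yielding $\det A_{2k} > 0$.

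For $n = 2k+1$ the block form of $B_n$ has one leading $1 \times 1$ zero block and $k$ further skew-symmetric $2 \times 2$ blocks with off-diagonal entries $\pm a_{2i\,2i+1}$. Each of these $2 \times 2$ blocks has determinant $a_{2i\,2i+1}^2$, again nonzero by the same factorization argument, so $\operatorname{rank} B_n = 2k$. Transferring through the invertible $P$ gives $\operatorname{rank} A_{2k+1} = 2k$.

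The only real obstacle is the combinatorial bookkeeping needed to confirm that condition \eqref{eq:MatrixCon} forces every relevant consecutive entry $a_{j\,j+1}$ to be nonzero; once that is checked, the determinant and rank computations are immediate from the canonical form.
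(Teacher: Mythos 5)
Your proposal is correct and coincides with the paper's own (implicit) reasoning: the corollary is stated without a separate proof precisely because, as you argue, congruence by the unimodular $P$ from the lemma preserves determinant and rank, and both quantities are read off from the block-diagonal form $B_n$. Your remark about using condition \eqref{eq:MatrixCon} to force the consecutive entries $a_{j\,j+1}\neq 0$ is the right reading (taking a triple with a third index on either side), and in the paper's applications these entries are exponentials, hence nonzero anyway.
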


\begin{lemma}\label{lem:OddDet}
Let $E=(1,1,\ldots,1), \, n=2k+1$ and all entries satisfy $0<a_{ij}\leq 1$.  Then the rank of the matrix $\left[\begin{array}{c}E\\A_{2k+1}\end{array}\right]$ is $2k+1$.
\end{lemma}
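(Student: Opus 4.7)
The plan is to use Corollary~\ref{lem:EvenDet}, which gives $\operatorname{rank}(A_{2k+1})=2k$, so that $\ker A_{2k+1}$ is one-dimensional. The augmented $(2k+2)\times(2k+1)$ matrix $\left[\begin{array}{c}E\\ A_{2k+1}\end{array}\right]$ then has full column rank $2k+1$ if and only if $\ker A_{2k+1}$ is not contained in $\ker E$, i.e.\ if and only if some (equivalently, any) nonzero $v\in\ker A_{2k+1}$ satisfies $Ev=\sum_i v_i\neq 0$. The whole claim thus reduces to exhibiting a null vector of $A_{2k+1}$ with strictly positive entry sum.

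To construct $v$ I would exploit the multiplicative identity $a_{ij}=a_{i,i+1}\,a_{i+1,i+2}\cdots a_{j-1,j}$ forced by \eqref{eq:MatrixCon}. Setting $L_i=\sum_{j<i}a_{ij}v_j$ and $R_i=\sum_{j>i}a_{ij}v_j$, the equations $(A_{2k+1}v)_i=0$ read $L_i=R_i$, and the identity yields the two-term recurrences $L_i=a_{i-1,i}(v_{i-1}+L_{i-1})$ and $R_i=a_{i,i+1}(v_{i+1}+R_{i+1})$ with $L_1=0$ and $R_{2k+1}=0$. Normalizing $v_{2k+1}=1$ and solving them propagates the closed form
\begin{equation*}
v_{2l+1}=\prod_{j=l}^{k-1}\frac{a_{2j+2,\,2j+3}}{a_{2j+1,\,2j+2}}\quad(0\le l\le k), \qquad v_{2l}=-a_{2l,\,2l+1}\,v_{2l+1}\quad(1\le l\le k).
\end{equation*}
A short induction then verifies $L_i=R_i=0$ at every odd $i$ and that the two ways of computing the common value at even $i$ agree, namely $L_{2m}=a_{2m-1,2m}v_{2m-1}=a_{2m,2m+1}v_{2m+1}=R_{2m}$, confirming $A_{2k+1}v=0$.

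It remains only to add up components. Regrouping gives
\begin{equation*}
Ev=v_1+\sum_{l=1}^{k}\bigl(v_{2l+1}+v_{2l}\bigr)=v_1+\sum_{l=1}^{k}\bigl(1-a_{2l,\,2l+1}\bigr)\,v_{2l+1},
\end{equation*}
and each $v_{2l+1}$ is a product of positive ratios, hence strictly positive, while every factor $1-a_{2l,\,2l+1}$ is nonnegative by the hypothesis $0<a_{ij}\le 1$. Therefore $Ev\ge v_1>0$, which yields the desired rank. The main obstacle is the consistency check that the explicit $v$ satisfies all $2k+1$ row equations rather than only those used to derive it; this is where both halves of the multiplicative identity must be invoked in a simultaneous forward/backward induction, and it is essential that the full hypothesis $a_{ij}\le 1$ (not merely positivity) be used so that each term $1-a_{2l,\,2l+1}$ in the final sum is of the right sign.
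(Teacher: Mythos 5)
Your proof is correct, but it follows a genuinely different route from the paper. The paper works with determinants: it forms the $(2k+1)\times(2k+1)$ matrix $\tilde A_n$ obtained by replacing the first row of $A_{2k+1}$ with $E$ and proves $\det\tilde A_n>0$ by induction on odd $n$, via a block decomposition with the trailing $2\times 2$ block, a Schur complement whose correction term has only one nonzero row, and column operations (using the chain relation \eqref{eq:MatrixCon}) that reduce the correction determinant to the product $a_{12}a_{23}\cdots a_{n-2\,n-1}$; positivity of this maximal minor then gives the rank. You instead take the rank statement of Corollary \ref{lem:EvenDet} as input, so that $\ker A_{2k+1}$ is one-dimensional, and exhibit an explicit null vector $v$ with $Ev>0$; your closed form for $v$ and the forward/backward verification of $L_i=R_i$ check out (I verified the recurrences $L_i=a_{i-1\,i}(v_{i-1}+L_{i-1})$, $R_i=a_{i\,i+1}(v_{i+1}+R_{i+1})$ and the key identity $a_{2m-1\,2m}v_{2m-1}=a_{2m\,2m+1}v_{2m+1}$ follow from \eqref{eq:MatrixCon}), and your regrouping $Ev=v_1+\sum_{l=1}^{k}(1-a_{2l\,2l+1})v_{2l+1}\ge v_1>0$ uses the hypothesis $0<a_{ij}\le 1$ in exactly the same way the paper uses $a_{n-1\,n}^{-1}-1\ge 0$. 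Note that your reduction ``full rank iff some null vector of $A_{2k+1}$ has nonzero entry sum'' genuinely needs the one-dimensionality of the kernel, so the appeal to Corollary \ref{lem:EvenDet} is essential (and non-circular, since that corollary rests only on the congruence lemma). What each approach buys: the paper's argument identifies a specific nonvanishing maximal minor and gives a quantitative positivity statement, while yours is more elementary (no determinant manipulations beyond the cited corollary) and produces as a by-product an explicit, sign-alternating basis vector of $\ker A_{2k+1}$, which is structurally suggestive in light of the alternating residues appearing in Theorem \ref{thm:leadingcoef}.
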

\begin{proof}

Let $n$ be any odd number.  It suffices to show that the determinant of the matrix
\[\tilde A_{n}=\left[\begin{array}{ccccccc}
1&1&1&\cdots&\cdots&\cdots&1\\
a_{12}&0&-a_{23}&-a_{24}&\cdots&\cdots&-a_{2\,n}\\
a_{13}&a_{23}&0&-a_{34}&\cdots&\cdots&-a_{3\,n}\\
\vdots&&a_{34}&\ddots&\ddots&&\vdots\\
\vdots&&&\ddots&\ddots&-a_{n-2\,n-1}&-a_{n-2\,n}\\
\vdots&&&&a_{n-2\,n-1}&0&-a_{n-1\,n}\\
a_{1\,n}&a_{2\,n}&a_{3\,n}&\cdots&\cdots&a_{n-1\,n}&0
\end{array}\right] \] is positive.
For $n=3$ direct computation shows that \[\det\tilde A_3=a_{13}(1-a_{23})+a_{23}^2>0.\] We assume now that the conclusion holds for $n-2$.  We will show that it also holds for $n$. First, we divide $\tilde A_{n}$ into four submatrices by \[\tilde A_{n}=\left[\begin{array}{cc}\tilde A_{n-2}& -\hat B\\ B^T&C\end{array}\right],\] where \[C=\left[\begin{array}{cc}0&
-a_{n-1\,n}\\a_{n-1\,n}&0\end{array}\right],\, B=\left[\begin{array}{cc}a_{1\,n-1}&a_{1\,n}\\\mathbf{b} &a_{n-1\,n}\mathbf{b} \end{array}\right],\, \hat B=\left[\begin{array}{cc}-1&-1\\\mathbf{b} &a_{n-1\,n}\mathbf{b} \end{array}\right],\] and $\mathbf{b} =(a_{2\,n-1},a_{3\,n-1},\ldots,a_{n-2\,n-1})^T$. Since $C$ is invertible we can factor $\tilde A_n$ into the product of 
upper and lower block triangular matrices as follows: 
\begin{equation*}
\begin{bmatrix}\tilde A_{n-2}& -\hat B\\ B^T&C\end{bmatrix}=\begin{bmatrix}
\tilde A_{n-2}+\hat B C^{-1} B^T& -\hat BC^{-1} \\ 0&I_2 \end{bmatrix} \begin{bmatrix}I_{n-2}& 0\\ B^T&C \end{bmatrix}. 
\end{equation*}
Hence $
\det\tilde A_{n}=\det(\tilde A_{n-2}+\hat BC^{-1}B^T)\det C=a_{n-1\,n}^2\det(\tilde A_{n-2}+\hat BC^{-1}B^T)$.  

Direct computation shows that all the entries of $\hat BC^{-1}B^T$ vanish except the first row which equals $(a_{n-1\,n}^{-1}-1)(a_{1\,n-1},a_{2\,n-1},\ldots,a_{n-2\,n-1})$, therefore
\[\begin{aligned}&\det(\tilde A_{n-2}+\hat BC^{-1}B^T)\\&=\det\tilde A_{n-2}+(a_{n-1\,n}^{-1}-1)\left|\begin{array}{cccccc}
a_{1\,n-1}&a_{2\,n-1}&\cdots&\cdots&\cdots&a_{n-2\,n-1}\\
a_{12}&0&-a_{23}&\cdots&\cdots&-a_{2\,n-2}\\
a_{13}&a_{23}&0&-a_{34}&\cdots&-a_{3\,n-2}\\
\vdots&&&&&\vdots\\
a_{1\,n-2}&a_{2\,n-2}&\cdots&\cdots&a_{n-3\,n-2}&0
\end{array}\right|\\\\
&=\det\tilde A_{n-2} +(a_{n-1\,n}^{-1}-1)\left|\begin{array}{cccccc}
a_{12}&0&-a_{23}&\cdots&\cdots&-a_{2\,n-2}\\
a_{13}&a_{23}&0&-a_{34}&\cdots&-a_{3\,n-2}\\
\vdots&&&&&\vdots\\
a_{1\,n-2}&a_{2\,n-2}&\cdots&\cdots&a_{n-3\,n-2}&0\\
a_{1\,n-1}&a_{2\,n-1}&\cdots&\cdots&a_{n-3\,n-1}&a_{n-2\,n-1}
\end{array}\right|,  \end{aligned}\]
where we used that $n$ is odd.  
Finally, in view of equation \eqref{eq:MatrixCon}, we can replace the 
matrix in the second determinant by an upper triangular matrix by performing appropriate 
column additions, obtaining 
\[\begin{aligned}
\frac{\det \tilde A_n}{a_{n-1, n}^2}&=\det\tilde A_{n-2} +(a_{n-1\,n}^{-1}-1)\left|\begin{array}{cccc}
a_{12}&&\multicolumn{2}{c}{\raisebox{-2.5ex}[0pt]{\Huge *}}\\
&a_{23}\\
\multicolumn{2}{c}{\raisebox{-4.5ex}[0pt]{\Huge 0}}&\ddots\\
&&&a_{n-2\,n-1}
\end{array}\right| \\
&=\det\tilde A_{n-2} +(a_{n-1\,n}^{-1}-1)a_{12}a_{23}\cdots a_{n-2\,n-1}>0.\end{aligned}\]

\end{proof}

By using the lemmas above, we can obtain the property of $m_i(t)$ at the time of blow-up.

\begin{theorem}\label{thm:BlowupOrder}
If $m_i$ blows up at some $t_0$ then $m_i$ has a pole of order $1$ at $t_0$. 
\end{theorem}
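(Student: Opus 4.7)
The plan is to combine the meromorphy provided by Theorem \ref{thm:merocentral} with a dominant-balance analysis of the mass ODE \eqref{eq:b-ode-short-m}, feeding the output into the structural lemmas for the matrix $A_n$ together with the conservation law $M_1=\sum_i m_i$. I would begin by writing Laurent expansions $m_k(t)=c_k(t-t_0)^{-p_k}+O((t-t_0)^{-p_k+1})$ for each $k$, with $p_k\ge 0$ (set $p_k=0$ if $m_k$ is regular at $t_0$). Let $P=\max_k p_k$ and $S=\{k:p_k=P\}$; the hypothesis gives $P\ge 1$ and $c_k\ne 0$ for $k\in S$. Since $x_j(t)$ is analytic at $t_0$ and the ordering $x_1<\cdots<x_n$ controls the sign function in the analytically continued system, the coefficients $a_{ki}(t)=e^{-\sgn(k-i)(x_k(t)-x_i(t))}$ extend analytically to positive limits $a_{ki}^{0}\le 1$ satisfying the multiplicative identity $a_{ij}^{0}a_{jl}^{0}=a_{il}^{0}$ for $i<j<l$; in particular these limits verify the hypotheses \eqref{eq:MatrixForm}--\eqref{eq:MatrixCon}.

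The key step is to match leading orders in \eqref{eq:b-ode-short-m} for each $k\in S$. The LHS is of order $(t-t_0)^{-P-1}$, while the RHS $2m_k\sum_i m_i\sgn(k-i)a_{ki}(t)$ is of order $(t-t_0)^{-2P}$ at its most singular, with that order achieved only by indices $i\in S$ (other indices contribute at order $-P-p_i\ge -2P+1$). Assuming for contradiction that $P\ge 2$, one has $-2P<-P-1$, so the leading RHS coefficient must vanish, yielding the homogeneous system
\begin{equation*}
\sum_{i\in S}c_i\sgn(k-i)a_{ki}^{0}=0,\qquad k\in S,
\end{equation*}
which reads $A_S\mathbf{c}=0$ on the principal submatrix $A_S=[\sgn(i-j)a_{ij}^{0}]_{i,j\in S}$. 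If $|S|$ is even, Corollary \ref{lem:EvenDet} gives $\det A_S>0$, forcing $\mathbf{c}=0$ and contradicting $c_k\ne 0$ for $k\in S$.

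The harder case, which I anticipate as the main obstacle, is $|S|$ odd: then $A_S$ necessarily has a nontrivial kernel and the ODE balance alone is inconclusive. I would close it by invoking conservation of $M_1=\sum_i m_i$ from Lemma \ref{lem:constants} with $p=1$: since $M_1$ stays finite through $t_0$, the top-order residues must cancel, giving $\sum_{k\in S}c_k=0$, i.e.\ $E\mathbf{c}=0$ with $E=(1,\dots,1)$. Adjoining this row to $A_S$ produces exactly the matrix of Lemma \ref{lem:OddDet}, whose rank equals $|S|$ under $0<a_{ij}^{0}\le 1$; hence $\mathbf{c}=0$, a contradiction. Thus $P\le 1$, so whenever $m_i$ blows up at $t_0$ it has a simple pole there. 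The elegance of the argument lies in the fact that the linear-algebra lemmas of this section were tailored precisely to dispose of the two parity cases, with the $M_1$-conservation supplying the extra equation needed when a direct skew-symmetric kernel argument fails.
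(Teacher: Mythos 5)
Your proposal is correct and follows essentially the same route as the paper: a dominant-balance comparison in \eqref{eq:b-ode-short-m} under the assumption of a pole of order at least $2$, yielding the homogeneous system $A_S\mathbf{c}=0$, with Corollary \ref{lem:EvenDet} disposing of the even case and Lemma \ref{lem:OddDet} together with the residue cancellation forced by conservation of $M_1$ (Lemma \ref{lem:constants} with $p=1$) disposing of the odd case. The only cosmetic difference is that the paper invokes the $M_1$ constraint uniformly (also to note that at least two masses blow up), whereas you reserve it for the odd-cardinality case; the logic is identical.
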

\begin{proof}
Since $m_i$'s are meromorphic in $t$ we can assume that the leading term in the Laurent series of $m_i$ around $t_0$ is $\frac{C_i}{(t-t_0)^{\alpha_i}}, \, C_i\neq 0$.  If the conclusion does not hold then
\[\alpha=\max_i\{\alpha_i\}\geq 2.\] Set $S=\{i_j:\alpha_{i_j}=\alpha\}=\{i_1,\ldots,i_k\}$ where $i_1<i_2<\cdots<i_k$ and $k$ is at least $2$ by virtue of Lemma \ref{lem:constants} 
with $p=1$.  Comparing the leading term of both sides of \eqref{eq:b-ode-short-m} with $i_j\in S$, one can see the leading term on the left hand side is $\frac{-\alpha C_{i_j}}{(t-t_0)^{\alpha+1}}$ while the leading term on the right hand side is
\[\frac{2C_{i_j}}{(t-t_0)^{2\alpha}}\sum_{l=1}^k \mathrm{sgn}(i_j-i_l)e^{-|x_{i_j}-x_{i_l}|}C_{i_l}.\]  Since $2\alpha>\alpha+1$, the coefficient of $(t-t_0)^{-2\alpha}$ must be zero, which leads to a homogeneous linear equations $A_kC=0$, where $A_k=(\sgn(j-l) a_{jl})$ is a $k\times k$ skew-symmetric matrix with $a_{jl}=e^{x_{i_j}-x_{i_l}}(1<j<l)$ and $C=(C_{i_1},\ldots,C_{i_k})^T$. Additionally one can also find
\[C_{i_1}+C_{i_2}+\cdots+C_{i_k}=0\] by comparing the leading term in $M_1$. It is clear that $A_k$ satisfies (\ref{eq:MatrixCon}) and the condition in Lemma
\ref{lem:OddDet}. Hence $C$ must be zero according to Corollary \ref{lem:EvenDet} and Lemma \ref{lem:OddDet}, which leads to a contradiction. \end{proof}

In the proof above, we only use that the $m_i(t)s$ are meromorphic. However, we can get stronger conclusions if we also take into account that the $x_i(t)s$ are holomorphic. 
\begin{theorem}\label{thm:leadingcoef}
If $m_{j_1},\ldots,m_{j_k}\,(1\leq j_1<\cdots<j_k\leq n)$ blow up at $t_c$ and all other $m_i$ remain bounded, then the following conclusions hold.

(1) $k$ must be even. 

(2) $t_c$ must be a collision time. Moreover for all odd $l$ such that $1\leq l<k$, the peakon with label $j_l$ must collide with the peakon with label $j_{l+1}$. 

(3) The leading term of $m_{j_s}(t)$ in the Laurent series around $t_c$ must have the form $\frac{(-1)^s}{2(t-t_c)}$ for all $1\leq s \leq k$.  
\end{theorem}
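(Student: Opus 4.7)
By Theorem \ref{thm:BlowupOrder} every blowing-up mass has a simple pole at $t_c$, so I start by writing
\[
  m_{j_s}(t) = \frac{C_s}{t-t_c} + O(1), \qquad C_s\neq 0,\quad 1\leq s\leq k,
\]
and substituting into \eqref{eq:b-ode-short-m} at index $j_s$. The bounded $m_i$'s contribute only $O((t-t_c)^{-1})$ to the right-hand side, so matching the $(t-t_c)^{-2}$ singularities and dividing by $C_s\neq 0$ yields the linear system $A_k C = -\tfrac12\mathbf{1}$, where $A_k=[\sgn(s-l)\,a_{sl}]$ is of the form \eqref{eq:MatrixForm} with $a_{sl} = e^{-|x_{j_s}(t_c)-x_{j_l}(t_c)|}$ and $\mathbf{1}=(1,\dots,1)^T$. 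The entries satisfy $0<a_{sl}\leq 1$, and because the $j$-ordering coincides with the ordering of positions at $t_c$ (preserved under analytic continuation), the additive identity for $-|x_{j_s}-x_{j_l}|$ along the real line yields the multiplicative condition \eqref{eq:MatrixCon}.

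For part (1), suppose $k$ were odd. Corollary \ref{lem:EvenDet} gives that $A_k$ has rank $k-1$, while Lemma \ref{lem:OddDet} applied with $E=\mathbf{1}^T$ gives that the matrix obtained by stacking $\mathbf{1}^T$ on top of $A_k$ has rank $k$. Hence $\mathbf{1}$ does not lie in the row (equivalently column, by skew-symmetry) space of $A_k$, so by the Fredholm alternative $A_kC=-\tfrac12\mathbf{1}$ has no solution---contradicting the existence of the nonzero residues $C_s$. Thus $k$ must be even.

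For parts (2) and (3), with $k$ even Corollary \ref{lem:EvenDet} makes $A_k$ invertible, so $C=-\tfrac12 A_k^{-1}\mathbf{1}$ is uniquely determined by the positions $x_{j_s}(t_c)$. I then bring in the conservation law $M_1=\sum_i m_ie^{x_i}$ (Lemma \ref{lem:constants} with $p=1$): $M_1$ is holomorphic in $t$, so its residue at $t_c$ vanishes, giving $VC=0$ with $V=(p_1,\dots,p_k)$ and $p_s=e^{x_{j_s}(t_c)}$, i.e.\ $VA_k^{-1}\mathbf{1}=0$. The plan is to show that this single scalar equation, combined with the ordering $p_1\leq\cdots\leq p_k$, forces $p_{2s-1}=p_{2s}$ for every $s=1,\dots,k/2$. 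I would proceed by induction on $k/2$, using the block similarity $P_1^T A_{2k} P_1=\diag(A_{2k-2}, C_k)$ from the block decomposition lemma preceding Corollary \ref{lem:EvenDet} to write $VA_{2k}^{-1}\mathbf{1}$ as $V'A_{2k-2}^{-1}\mathbf{1}'$ plus a term manifestly proportional to $p_{2k}-p_{2k-1}$ with a positive coefficient; vanishing of the sum then forces the outermost pair to collide and reduces to the $k-2$ case. Once the paired structure $x_{j_{2s-1}}(t_c)=x_{j_{2s}}(t_c)$ is in place, $a_{2s-1,2s}=1$, and subtracting the row-$(2s-1)$ from the row-$2s$ equation of $A_kC=-\tfrac12\mathbf{1}$ causes all cross terms (which depend only on the sums $C_{2r-1}+C_{2r}$ for $r\neq s$) to cancel, yielding $C_{2s-1}+C_{2s}=0$; plugging back gives $C_{2s}=\tfrac12$ and $C_{2s-1}=-\tfrac12$, i.e.\ $C_s=(-1)^s/2$, which is (3).

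The principal obstacle is the inductive reduction in the previous paragraph---extracting $k/2$ collision equalities from the single scalar constraint $VA_k^{-1}\mathbf{1}=0$. Direct computation in $k=2$ and $k=4$ confirms that $VA_k^{-1}\mathbf{1}$ organises itself as a positive-weight sum of the gaps $p_{2s}-p_{2s-1}$, but proving this factorisation in general demands careful bookkeeping of how \eqref{eq:MatrixCon} propagates through the block decomposition. If that direct route proves unwieldy, a natural fallback is to supplement $M_1$ with the higher constants $M_p$ of Lemma \ref{lem:constants}, whose top-order pole coefficient $\prod_{s\in I}C_s\prod(1-e^{x_{j_{i_r}}(t_c)-x_{j_{i_{r+1}}}(t_c)})^2$ must also vanish and so imposes additional collision conditions.
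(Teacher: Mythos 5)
Your derivation of $A_kC=-\tfrac12\mathbf{1}$ is correct, and your rank argument for part (1) (Corollary \ref{lem:EvenDet} plus Lemma \ref{lem:OddDet} show $\mathbf{1}$ is not in the column space of $A_k$ for odd $k$, so the residue system is unsolvable) is sound — in fact it makes the evenness more explicit than the paper does. The trouble starts with the constraint you rely on for parts (2)–(3). Lemma \ref{lem:constants} with $p=1$ gives $M_1=\sum_i m_i$, \emph{not} $\sum_i m_ie^{x_i}$; the residue condition it yields is $\sum_s C_s=0$, which is automatic from skew-symmetry ($0=C^TA_kC=-\tfrac12\sum_s C_s$) and carries no positional information whatsoever. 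The weighted sum $\sum_i m_ie^{x_i}$ is the quantity $M_+$ of Section \ref{sec:IP}, which is \emph{not} conserved (it drives the evolution $\dot W=\frac{W-1}{z}+M_+$); it is, however, an entire function of $t$ by the theorem on the time evolution of the Weyl functions together with Corollary \ref{cor:WZflow}, so the identity $\sum_s C_se^{x_{j_s}(t_c)}=0$ can be rescued — but only by citing that, not Lemma \ref{lem:constants}.

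Even granting $VC=0$, the core of (2) and (3) — that this single scalar equation, together with the ordering, forces all $k/2$ pairings $x_{j_{2s-1}}(t_c)=x_{j_{2s}}(t_c)$ and the residues $(-1)^s/2$ — is precisely the step you flag as the "principal obstacle" and verify only for $k=2,4$. So the proposal is incomplete exactly where the theorem lives. The paper sidesteps this entirely by using the other half of the peakon system: since each $x_{j_s}$ is analytic at $t_c$ (Theorem \ref{thm:merocentral}), the right-hand side of \eqref{eq:b-ode-short-x} at each blowing-up index must stay bounded, so its residue vanishes, giving the full system $B^{(k)}C=0$ with $B^{(k)}=\bigl(e^{-|x_{j_m}(t_c)-x_{j_l}(t_c)|}\bigr)$ — that is $k$ equations rather than one. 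Combining the last two rows of $A_kC=-\tfrac12\mathbf{1}$ and $B^{(k)}C=0$ forces $a_{k-1\,k}=1$, $C_{k-1}=-\tfrac12$, $C_k=\tfrac12$, and substitution reduces cleanly to the case $k-2$. To complete your version you must either prove in general your claimed positive-weight factorisation of $VA_k^{-1}\mathbf{1}$ in terms of the gaps (the $k=2,4$ checks do not constitute a proof), or simply adjoin the $\dot x$-residue equations, for which you already have all the ingredients.
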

\begin{proof}

Assume that $m_{j_1},m_{j_2},\ldots,m_{j_k}$ blow up at $t_c$.  Since $M_1=m_1+m_2+\cdots+m_n$ is conserved, $k$ is at least $2$.  Moreover, by 
Theorem \ref{thm:BlowupOrder} the leading term in each $m_{i_j}$'s Laurent series has the form $\frac{C_j}{t-t_c}$. Hence, by equations \eqref{eq:b-ode-short-m}, the coefficients $C=(C_1,\ldots,C_{k})^T$ satisfy the linear equations
\begin{equation}\label{eq:leadingterm-m}
A_k C=-\frac12(1,\ldots,1)^T,
\end{equation}
where the matrix $A_k =(A^{(k)}_{lm})_{1\leq l,m\leq k}=(\mathrm{sgn}(l-m)
e^{-|x_{j_m}-x_{j_l}|})_{1\leq l,m\leq k}$ satisfies (\ref{eq:MatrixForm}) and (\ref{eq:MatrixCon}).  Likewise, comparing the leading terms of both sides of \eqref{eq:b-ode-short-x} with the subscript $j_s\,(1\leq s\leq k)$, one finds
\begin{equation}\label{eq:leadingterm-x}
B^{(k)}C=0,
\end{equation} 
where the matrix $B^{(k)}=(B^{(k)}_{lm})_{1\leq l,m\leq k}=(e^{-|x_{j_m}-x_{j_l}}|)_{1\leq l,m\leq k}$.
 Now we prove that the theorem holds for $k=2$. In this case, \eqref{eq:leadingterm-m} and \eqref{eq:leadingterm-x} reduce to
\[\left[\begin{array}{cc} 0&-a_{12}\\a_{12}&0\end{array}\right] \left[\begin{array}{c} C_1\\C_2\end{array}\right]= -\frac12\left[\begin{array}{c}1\\1\end{array}\right],\;\left[\begin{array}{cc} 1&a_{12}\\a_{12}&1\end{array}\right] \left[\begin{array}{c} C_1\\C_2\end{array}\right]= 0,\] where $a_{12}=e^{x_{j_1}-x_{j_2}}$. 
 Direct computation shows that the solution of the equations exists iff $a_{12}=1$, and the solution is $C_2=-C_1=\frac12$. Since $a_{12}=1$ is equivalent to $x_{j_1}(t_c)=x_{j_2}(t_c)$, we conclude that the peakon with label $j_1$ collides at $t_c$ with the peakon with label $j_2$. 
 Suppose now the conclusions are valid for $k-2$.  
We will show that they hold for $k$ as well. Let us use the same block decomposition 
as in equation \eqref{eq:An}, obtaining 
\[\begin{aligned}A_k&=\left[\begin{array}{ccc}A_{k-2}&-\mathbf{a}&-a_{k-1\,k}\mathbf{a}\\ \mathbf{a}^T&0&-a_{k-1\,k}\\ a_{k-1\,k}\mathbf{a}^T &a_{k-1\,k}&0\end{array}\right],\\\\B^{(k)}&=\left[\begin{array}{ccc}B^{(k-2)}&\mathbf{a}&a_{k-1\,k}\mathbf{a}\\ \mathbf{a}^T&1&a_{k-1\,k}\\ a_{k-1\,k}\mathbf{a}^T &a_{k-1\,k}&1\end{array}\right]
,\end{aligned}\] where $\mathbf{a}=(a_{1\,k-1},a_{2\,k-1},\ldots,a_{k-2\,k-1})^T$. Let us now 
combine the last two rows of \eqref{eq:leadingterm-m} and \eqref{eq:leadingterm-x}, writing them collectively as
\[\left[\begin{array}{ccc}\mathbf{a} ^T&0&-a_{k-1\,k}\\ a_{k-1\,k}\mathbf{a}^T &a_{k-1\,k}&0\\\mathbf{a}^T&1&a_{k-1\,k}\\ a_{k-1\,k}\mathbf{a}^T&a_{k-1\,k}&1\end{array}\right]C= \left[\begin{array}{c}-\frac12\\ -\frac12\\ 0\\0 \end{array}\right].  \] The latter expression can subsequently be easily reduced to 
\[\left[\begin{array}{ccc}\mathbf{a}^T&0&-a_{k-1\,k}\\ 0&a_{k-1\,k}&a_{k-1\,k}^2\\0&1&2a_{k-1\,k}\\ 0&0&1\end{array}\right]
\left[\begin{array}{c}C_1\\ \vdots\\ C_{k-2}\\C_{k-1}\\ C_{k} \end{array}\right]= 
\frac12\left[\begin{array}{c}-1\\ -(1-a_{k-1\,k})\\ 1\\ 1 \end{array}\right],\] which implies the condition for the existence of the solution to be $a_{k-1\,k}^2=1$, hence $a_{k-1,\,k}=1$. The latter condition indicates the collision of $m_{j_k}$ with $m_{j_{k-1}}$. Futhermore, 
 the solution for the last two components of $C$ is then $C_{k-1}=-\frac12$ and $C_{k}=\frac12$, which proves the sign statement for the last two components.  Substituting $a_{k-1\,k}=1, C_{k-1}=-\frac12, C_{k}=\frac12$ into \eqref{eq:leadingterm-m} and \eqref{eq:leadingterm-x} and denoting the 
 first $k-2$ components of $C$ by $\mathbf{C}$ we obtain the following equations: 
 \[A_{k-2} \mathbf{C}=-\frac12(1,\ldots,1)^T,\; B^{(k-2)}\mathbf{C}=0, \; \mathbf{a}^T\mathbf{C}=0.\] The first two equations hold by the induction hypothesis.  To show that the third equation 
 holds automatically if the induction hypothesis is satisfied, we observe that as the 
 result of collisions ($j_1$th mass collides with $j_2$th mass  etc.) $\mathbf{a}^T=(a_{1\, k-1}, a_{1\, k-1}, 
 a_{3\, k-1}, a_{3\, k-1},\cdots, a_{k-3\, k-1}, a_{k-3\, k-1})$, hence, indeed, the last 
 equation follows from the induction hypothesis.  

\end{proof}

The following amplification of item 2 in the above theorem is automatic.  
\begin{corollary} Suppose $m_{j_1},\ldots,m_{j_k}\,(1\leq j_1<\cdots<j_k\leq n)$ blow up at $t_c$ and all other $m_i$ remain bounded.  Then for all odd $l$ such that $1\leq l<k$, the peakon with label $j_l$ must collide at $t_c$ with the peakon with label $j_{l}+1$ (its neighbour to the right).
\end{corollary}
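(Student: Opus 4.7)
The plan is to derive this as a squeeze argument from item 2 of Theorem \ref{thm:leadingcoef} together with the strict ordering $x_1(t)<x_2(t)<\cdots<x_n(t)$ that persists on $(0,t_c)$. Item 2 already gives $x_{j_l}(t_c)=x_{j_{l+1}}(t_c)$ for every odd $l<k$, so the only thing left to show is that the right-neighbour $j_l+1$ of the label $j_l$ also lands at that common limit, even in the case $j_{l+1}>j_l+1$ when some of the indices between $j_l$ and $j_{l+1}$ correspond to masses $m_i$ that remain bounded.

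The key step is simply the following. Fix an odd $l$ with $1\le l<k$ and let $i$ be any index with $j_l<i<j_{l+1}$. On the interval $(0,t_c)$ the ordering gives
\[
x_{j_l}(t)<x_i(t)<x_{j_{l+1}}(t).
\]
By Theorem \ref{thm:merocentral} all three position functions are analytic, and hence continuous, at $t_c$. Letting $t\to t_c^-$ and applying item 2 of Theorem \ref{thm:leadingcoef} to the outer terms yields $x_{j_l}(t_c)\le x_i(t_c)\le x_{j_{l+1}}(t_c)=x_{j_l}(t_c)$, so $x_i(t_c)=x_{j_l}(t_c)$. Taking $i=j_l+1$ in particular gives $x_{j_l+1}(t_c)=x_{j_l}(t_c)$, which is precisely the assertion that peakon $j_l$ collides at $t_c$ with its right neighbour $j_l+1$.

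There is no real obstacle here: the entire argument is a continuity sandwich using the a priori ordering, which is why the corollary is labelled as automatic. I would note in passing that the same reasoning shows that whenever $j_{l+1}>j_l+1$, every intermediate peakon $x_i$ (with $j_l<i<j_{l+1}$) simultaneously collides at $t_c$ at the same point, even though its mass $m_i$ stays bounded. The question of whether such intermediate-bounded-mass collisions can actually occur is of course the subject of Theorem \ref{thm:notriple} and is beyond the scope of this corollary.
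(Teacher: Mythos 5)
Your proposal is correct and matches what the paper intends: the corollary is stated as ``automatic'' precisely because, once item (2) of Theorem \ref{thm:leadingcoef} gives $x_{j_l}(t_c)=x_{j_{l+1}}(t_c)$, the persistence of the ordering $x_1(t)<\cdots<x_n(t)$ on $(0,t_c)$ together with continuity of the positions at $t_c$ (Theorem \ref{thm:merocentral}) forces every intermediate position, in particular $x_{j_l+1}$, to the same limit. Your continuity sandwich is exactly this argument, so no gap.
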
 

So far we have established that when the masses become unbounded the collisions 
must occur.  The converse turns out to be valid as well.  
\begin{theorem}\label{thm:notriple} \mbox{} For all initial conditions for which $M_n\neq 0$
the following properties are valid: 
\begin{enumerate}
\item[(1)] If the $i$-th peakon collides with another peakon/peakons at $t_c$, $m_i(t)$ must blow up at $t_c$.

\item[(2)] For all $i$, $m_i(t)$ cannot change its sign. In  particular, $m_i(t)\neq 0$ for $t<t_c$.

\item[(3)] There are no multiple collisions.  
\item[(4)] The distance between colliding peakons 
has a simple zero at $t_c$.  
\end{enumerate} 
\end{theorem}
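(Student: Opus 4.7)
I would derive all four parts from the conservation and non-vanishing of the top constant of motion
\[M_n=\prod_{k=1}^n m_k\,\cdot\,\prod_{j=1}^{n-1}(1-e^{x_j-x_{j+1}})^2\]
(Lemma~\ref{lem:constants} with $p=n$), combined with Theorems~\ref{thm:BlowupOrder} and \ref{thm:leadingcoef}.

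Part (2) I would handle directly. The mass ODE~\eqref{eq:b-ode-short-m} factors as $\dot m_k=m_k f_k(t)$ with $f_k(t)=2\sum_i m_i\sgn(x_k-x_i)e^{-|x_k-x_i|}$ continuous on $(0,t_c)$, so $m_k(t)=m_k(0)\exp\!\int_0^t f_k(s)\,ds$ preserves sign and remains nonzero on $(0,t_c)$ provided $m_k(0)\neq 0$. The latter follows from $M_n(0)\neq 0$: the second product in the formula for $M_n$ is strictly positive at $t=0$ by the initial ordering $x_1(0)<\dots<x_n(0)$, so each $m_k(0)$ must be nonzero.

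For (1), (3) and (4) I would run a single order-counting argument at $t_c$. Since $M_n$ is a nonzero constant and every factor is meromorphic in $t$ at $t_c$ (Theorem~\ref{thm:merocentral}), the orders at $t_c$ must sum to $0$. By Theorem~\ref{thm:BlowupOrder} each $m_k$ with $k$ in the blowup set $S$ has order exactly $-1$; for $k\notin S$ the mass has some order $o_k\geq 0$. Each $(1-e^{x_j-x_{j+1}})^2$ has order $2q_j$, where $q_j\geq 0$ is the order of the zero of $x_{j+1}-x_j$ at $t_c$; let $\mathcal C$ be the set of adjacent-integer colliding pairs (those with $q_j\geq 1$) and set $Q=\sum_{c\in\mathcal C}q_c$. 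The order identity reads $|S|=2Q+\sum_{k\notin S}o_k$, so in particular $2Q\leq|S|$. On the other hand, Theorem~\ref{thm:leadingcoef}(2) decomposes $S$ into $r=|S|/2$ colliding pairs $(j_{2s-1},j_{2s})$; by the ordering of positions each such pair forces the whole block $j_{2s-1},j_{2s-1}+1,\dots,j_{2s}$ to coincide at $t_c$, contributing $j_{2s}-j_{2s-1}\geq 1$ adjacent-integer collisions, so
\[Q\;\geq\;|\mathcal C|\;\geq\;\sum_s(j_{2s}-j_{2s-1})\;\geq\;r\;=\;|S|/2.\]
All these inequalities must collapse to equalities: $j_{2s}-j_{2s-1}=1$ for every $s$ (blowup pairs are integer-adjacent), $\mathcal C$ consists exactly of these $r$ disjoint pairs (no extra collisions), each $q_c=1$, and each $o_k=0$. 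This yields (1) (any colliding peakon lies in a blowup pair, so both of its masses blow up), (3) (disjointness of the pairs in $\mathcal C$ prevents any peakon from belonging to two colliding pairs, ruling out multiple collisions), and (4) (simple zeros).

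The main obstacle will be packaging the various order and counting estimates into the clean double bound $2Q\leq|S|\leq 2Q$, which simultaneously pins down the pairing structure, the simple zero of the distance, and the absence of multiple collisions; once this squeeze is set up, everything else is a direct bookkeeping corollary of results already established in Sections~\ref{sec:IP} and the first half of Section~\ref{sec:blowup}.
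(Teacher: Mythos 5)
Your proposal is correct and takes essentially the same route as the paper: parts (1), (3) and (4) come from balancing the simple poles of the blowing-up masses (Theorem \ref{thm:BlowupOrder}) against the zeros of the factors $(1-e^{x_j-x_{j+1}})^2$ in the conserved, nonzero $M_n$, combined with the pairing of blow-up indices from Theorem \ref{thm:leadingcoef}; your order identity $|S|=2Q+\sum_{k\notin S}o_k$ is just a tidier bookkeeping of the paper's counting inequality $|J|\geq 2(|I|-l)$ with $l\leq |I|/2$. The only minor deviation is part (2), where you integrate $\dot m_k=m_k f_k$ and use $m_k(0)\neq 0$ rather than evaluating $M_n$ at a hypothetical zero of $m_i$; both arguments rest on $M_n\neq 0$ and are equally valid.
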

\begin{proof}
Without loss of  generality we can suppose that the labelling is chosen so that $x_1(t_c)=\cdots=x_{k_1}(t_c)<x_{k_1+1}(t_c)=x_{k_1+2}(t_c)=\cdots =x_{k_2}(t_c)<\cdots <x_{k_l}(t_c)$ 
for all colliding peakons at distinct positions $x_{k_1}(t_c)<x_{k_2}(t_c)<\cdots <x_{k_l}(t_c)$.  
Let us now denote the set indexing all colliding peakons by $I$.  Since $M_n=m_1m_2(1-e^{x_1-x_2})^2\cdots m_{n-1}m_n (1-e^{x_{n-1}-x_n})^2$ is conserved and nonzero, it is 
clear that some of the masses must become unbounded.  Let us denote 
the set of labels of those masses which blow up at $t_c$ by $J$.  By Theorem \ref{thm:BlowupOrder} any such mass corresponds to a colliding peakon; thus $J\subset I$.  Moreover, any such a mass has a simple pole at $t_c$.  
On the other hand, for each colliding peakons with adjacent indices $j$ and $j+1$,  
$t_c$ is a zero of $(1-e^{x_j-x_{j+1}})^2$ of order bounded from below by $2$. Thus the order of the zero of all such exponential factors appearing in $M_n$ is bounded from below by $2([k_1-1]+
[k_2-1]+\cdots [k_l-1])=2(\abs{I}-l)$, where $\abs{I}$ denotes the cardinality of $I$.  
Hence, since all unbounded masses have poles of order $1$, $\abs{J}\geq 2(\abs{I}-l)$ to ensure that $M_n\neq 0$.  
The maximum of $l$ occurs when the masses collide in pairs, hence $l\leq \frac{\abs{I}}{2}$ and thus $\abs{J}\geq 2(\abs{I}-\frac{\abs{I}}{2})=\abs{I}$.  This proves that $J=I$ since $J\subset I$ and thus (1) is proven.  
To prove (3) we return to the inequality above which now reads $\abs{I}\geq2(\abs{I}-l)$, 
implying $l\geq \frac{\abs{I}}{2}$.  Since the right hand side is the maximum of $l$, 
$l=\frac{\abs{I}}{2}$ follows, which in turn implies that all collisions occur in pairs, hence absence of multiple collisions.  
To prove (4) we note that for $M_n$ to remain bounded the order of the zero of 
all exponential factors has to be exactly $\abs{I}=2\frac{\abs{I}}{2}$  hence 
each factor $(1-e^{x_j-x_{j+1}})^2$ has zero of order exactly equal $2$.

This concluded the proof of (1), (3) and (4).  
In order to prove (2) we suppose that for some $i$, $m_i(t)$ changes its sign, then there exists some $t_0$ for which $m_i(t_0)=0$ while all $m_j$s remain bounded since $t_0<t_c$.  
Hence
\[M_n=\lim_{t\to t_0} M_n=\lim_{t\to t_0}m_1m_2\cdots m_n(1-e^{x_1-x_2})^2\cdots(1-e^{x_{n-1}-x_n})^2=0.\] This contradicts $M_n\neq0$.
\end{proof}

\begin{remark}
It is now not difficult to verify that the constants of motion $M_1,\ldots,M_n$ 
can be extended up until the collision time $t_c$ by using Lemma \ref{lem:constants}, 
followed by Theorems \ref{thm:leadingcoef}, \ref{thm:notriple}.  
\end{remark}

\begin{corollary}\label{thm:SignbeforeCollision}
If $m_j$ and $m_{j+1}$ collide at $t_c>0$ then $m_j>0$ and $m_{j+1}<0$ before the collision.
\end{corollary}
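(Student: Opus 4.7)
The plan is to read off the signs directly from the leading Laurent coefficients identified in Theorem \ref{thm:leadingcoef}(3), and then to propagate them back in time using the no-sign-change property of Theorem \ref{thm:notriple}(2).

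The first step is to locate the pair $(j,j+1)$ within the enumeration $j_1<j_2<\cdots<j_k$ of blowing-up indices employed in Theorem \ref{thm:leadingcoef}. By Theorem \ref{thm:notriple}(1) both colliding masses must blow up at $t_c$, so $j$ and $j+1$ both appear in that list. By Theorem \ref{thm:notriple}(3) there are no multiple collisions and all collisions happen in pairs, and the Corollary following Theorem \ref{thm:leadingcoef} groups the blowing-up labels as consecutive pairs $(j_{2s-1},j_{2s})$ with $j_{2s}=j_{2s-1}+1$. Hence $(j,j+1)=(j_{2l-1},j_{2l})$ for some $l$, so the index of $m_j$ in the Laurent list is odd and the index of $m_{j+1}$ is even.

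By Theorem \ref{thm:leadingcoef}(3), the leading Laurent expansions at $t_c$ are therefore
\[
m_j(t)=\frac{(-1)^{2l-1}}{2(t-t_c)}+O(1)=\frac{-1}{2(t-t_c)}+O(1),\qquad m_{j+1}(t)=\frac{(-1)^{2l}}{2(t-t_c)}+O(1)=\frac{1}{2(t-t_c)}+O(1).
\]
Because the singular term dominates as $t\to t_c^-$ and $t-t_c<0$ there, we obtain $m_j(t)>0$ and $m_{j+1}(t)<0$ on some left-sided neighbourhood of $t_c$.

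Finally, Theorem \ref{thm:notriple}(2) asserts that no $m_i$ can change sign on $(0,t_c)$, so the signs established immediately before $t_c$ persist throughout the entire interval $(0,t_c)$, yielding $m_j>0$ and $m_{j+1}<0$ before the collision. The only mildly delicate point is the parity identification in the second paragraph; everything else is direct substitution and a sign check, so I do not anticipate any genuine obstacle.
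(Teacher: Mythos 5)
Your proposal is correct and follows essentially the same route as the paper: read the signs from the $\mp\frac{1}{2(t-t_c)}$ leading terms supplied by Theorem \ref{thm:leadingcoef} (using that collisions occur only in pairs) and then propagate them backwards via the no-sign-change statement of Theorem \ref{thm:notriple}. Your extra care with the odd/even parity of the indices within the blow-up list is a detail the paper leaves implicit, but it is the same argument.
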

\begin{proof}
Since collisions only occur in pairs, the leading terms in $m_j$ and $m_{j+1}$'s Laurent series must be $\mp\frac{1}{2(t-t_c)}$ respectively. This implies 
\[\lim_{t\to t_c^-}m_j=+\infty,\quad \lim_{t\to t_c^-}m_{j+1}=-\infty.\] The conclusion holds since  in view of Theorem \ref{thm:notriple} $m_j$ and $m_{j+1}$ cannot change their signs.  
\end{proof}

The following proposition shows that the {\sl simultaneous collisions} 
(several peakon-antipeakon pairs collide at distinct locations at the common time $t_c$) can happen.  We indicate below how certain symmetric initial conditions will lead to 
simultaneous collisions. To this end we consider equations \eqref{eq:b-ode-short} for $n=4$ and a special choice of initial conditions.  

\begin{lemma}\label{lem:SymCon}
If the initial conditions satisfy
\begin{subequations}\label{con:symmetry}
\begin{align}
m_1(0)&=-m_4(0)>0,\; m_2(0)=-m_3(0)<0,\;\\
x_1(0)&-x_2(0)=x_3(0)-x_4(0)<0, \, \, 
\end{align}
\end{subequations}
then $m_1(t)=-m_4(t),\, m_2(t)=-m_3(t), \, x_1(t)-x_2(t)=x_3(t)-x_4(t) $ will hold for all $0<t<t_c$.
\end{lemma}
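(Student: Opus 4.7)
The plan is to exhibit a discrete symmetry of the peakon ODE system \eqref{eq:b-ode-short} for which the initial data \eqref{con:symmetry} is a fixed point, and then invoke uniqueness of solutions. The hypothesis $x_1(0) - x_2(0) = x_3(0) - x_4(0)$ is equivalent to $x_1(0) + x_4(0) = x_2(0) + x_3(0)$; writing the common value as $2c$, I would first translate coordinates by $x_i \mapsto x_i - c$, which is permitted because the right-hand sides of \eqref{eq:b-ode-short} depend only on differences $x_k - x_i$. In the new coordinates the initial data satisfies $x_i(0) = -x_{5-i}(0)$ and $m_i(0) = -m_{5-i}(0)$ for $i=1,2,3,4$.

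Next I would verify the central fact that the involution
$\sigma: (x_i, m_i) \mapsto (\tilde x_i, \tilde m_i) := (-x_{5-i}, -m_{5-i})$
maps solutions of \eqref{eq:b-ode-short} to solutions. The new labels still respect the ordering required by the peakon ansatz, since $x_1<x_2<x_3<x_4$ implies $\tilde x_1 < \tilde x_2 < \tilde x_3 < \tilde x_4$. The verification itself is a direct computation: substituting $\tilde x_k, \tilde m_k$ into the right-hand side of \eqref{eq:b-ode-short-x}, using $|\tilde x_k - \tilde x_i| = |x_{5-k} - x_{5-i}|$ together with the relabelling $j = 5-i$, gives $-\dot x_{5-k}$, which matches $\dot{\tilde x}_k$. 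The analogous check for \eqref{eq:b-ode-short-m} goes through because the extra minus sign from $\sgn(\tilde x_k - \tilde x_i) = -\sgn(x_{5-k} - x_{5-i})$ compensates the two minus signs carried by $\tilde m_k$ and $\tilde m_i$. Conceptually, $\sigma$ is the shadow at the ODE level of the reflection symmetry $u(x,t) \mapsto -u(-x,t)$ of the DP equation \eqref{eq:DP}.

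On the interval $(0, t_c)$ no collision has occurred, so the peakon vector field in \eqref{eq:b-ode-short} is smooth and its solutions are uniquely determined by initial data. The symmetry $(\tilde x_i(0), \tilde m_i(0)) = (x_i(0), m_i(0))$ therefore propagates, yielding $(\tilde x_i(t), \tilde m_i(t)) = (x_i(t), m_i(t))$ throughout $(0, t_c)$. Reading off components: $m_i(t) = -m_{5-i}(t)$ gives $m_1 = -m_4$ and $m_2 = -m_3$, while $x_i(t) = -x_{5-i}(t)$ gives $x_1 + x_4 = 0 = x_2 + x_3$, which is equivalent to $x_1 - x_2 = x_3 - x_4$; this last identity is translation invariant, so it also holds in the original un-shifted coordinates. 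The only point requiring care is the sign bookkeeping that shows $\sigma$ preserves \eqref{eq:b-ode-short}; beyond that the argument is just standard ODE uniqueness.
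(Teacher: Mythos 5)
Your proof is correct, but it takes a genuinely different route from the paper's. The paper writes down an explicit reduced five-dimensional system in $(x_1,x_2,x_3,m_1,m_2)$ and checks by direct substitution that $\{x_1,x_2,x_3,x_2+x_3-x_1,m_1,m_2,-m_2,-m_1\}$ then solves the full $n=4$ system \eqref{eq:b-ode-short}; together with uniqueness of solutions for the data \eqref{con:symmetry} this exhibits the constraint set as an invariant manifold. You instead observe that the involution $(x_i,m_i)\mapsto(-x_{5-i},-m_{5-i})$ --- the multipeakon shadow of the DP symmetry $u(x,t)\mapsto -u(-x,t)$ --- maps solutions of \eqref{eq:b-ode-short} to solutions, that after a translation (legitimate because the vector field depends only on the differences $x_k-x_i$, and the conclusion $x_1-x_2=x_3-x_4$ is translation invariant) the data \eqref{con:symmetry} is a fixed point of this involution, and you then invoke uniqueness in the open ordered sector, where the right-hand side is smooth, to propagate the symmetry on $(0,t_c)$. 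Your sign bookkeeping is right: the relabelling preserves the ordering, the position equation picks up one sign from $\dot{\tilde x}_k=-\dot x_{5-k}$ matched by the sign of $\tilde m_i$, and in the mass equation the flip of $\sgn$ supplies the sign needed to match $\dot{\tilde m}_k=-\dot m_{5-k}$ after the two mass signs cancel. Both arguments ultimately rest on the same ODE uniqueness statement; what yours buys is conceptual economy and generality (no reduced system to compute, and the argument extends verbatim to any number of reflection-symmetric peakon--antipeakon pairs), while the paper's direct computation yields the explicit reduced system, which is convenient for further analysis and for producing the simultaneous-collision example illustrated in Figure \ref{fig:1}.
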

\begin{proof}
Consider the following ODEs
\[\begin{aligned}
&\dot x_1=m_1+m_2e^{x_1-x_2}-m_2e^{x_1-x_3}-m_1e^{2x_1-x_2-x_3},\\
&\dot x_2=m_1e^{x_1-x_2}+m_2-m_2e^{x_2-x_3}-m_1e^{x_1-x_3},\\
&\dot x_3=m_1e^{x_1-x_3}+m_2e^{x_2-x_3}-m_2-m_1e^{x_1-x_2}, \\
&\dot m_1=-2m_1(-m_2e^{x_1-x_2}+m_2e^{x_1-x_3}+m_1e^{2x_1-x_2-x_3}), \\
&\dot m_2=-2m_2(m_1e^{x_1-x_2}+m_2e^{x_2-x_3}+m_1e^{x_1-x_3}), \\
\end{aligned}\]
then direct computation shows that $\{x_1,x_2,x_3,x_2+x_3-x_1,m_1,m_2,-m_2,-m_1\}$ satisfy the system of ODEs (\ref{eq:b-ode-short}) for $n=4$.
\end{proof}
The following is then immediate (see figure \ref{fig:1}).   
\begin{corollary}
If the initial conditions (\ref{con:symmetry}) hold and the peakon-antipeakon pair 
$(m_1,m_2)$ collides at $t_c$ then so does $(m_3,m_4)$ and vice-versa.
\end{corollary}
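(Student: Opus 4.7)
The plan is to deduce the corollary directly from the invariance of the symmetry \eqref{con:symmetry} under the peakon flow, combined with the analyticity of the positions up to the collision time, so that equality of distances on the open interval $(0,t_c)$ extends to equality at $t_c$.

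First, I would apply Lemma \ref{lem:SymCon} to conclude that under the assumed initial conditions the identity $x_1(t)-x_2(t)=x_3(t)-x_4(t)$ holds for every $t\in(0,t_c)$. By Theorem \ref{thm:merocentral}, each position $x_j(t)$ extends to a function analytic at $t_c$, so both sides of this identity are continuous at $t_c$ and the equality persists in the limit $t\to t_c^-$.

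Next, I would observe that the hypothesis that the pair $(m_1,m_2)$ collides at $t_c$ means precisely that $\lim_{t\to t_c^-}\bigl(x_1(t)-x_2(t)\bigr)=0$. The identity just established then forces $\lim_{t\to t_c^-}\bigl(x_3(t)-x_4(t)\bigr)=0$, which is exactly the statement that $(m_3,m_4)$ collides at $t_c$. Reversing the roles of the two pairs gives the ``vice-versa'' direction, so the biconditional follows.

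Before concluding, I would briefly address the only potential subtlety, namely whether the $(m_2,m_3)$ pair could collide earlier than $t_c$ and invalidate the application of Lemma \ref{lem:SymCon} on all of $(0,t_c)$. Under \eqref{con:symmetry} we have $M_n\neq 0$ (all $m_i(0)$ are nonzero and all consecutive distances are strictly positive), so Theorem \ref{thm:notriple}\,(2) applies and no mass can change sign on $(0,t_c)$. In particular $m_2(t)<0<m_3(t)$ throughout, which by Corollary \ref{thm:SignbeforeCollision} rules out a collision of $(m_2,m_3)$, since a peakon--antipeakon collision of neighbours requires the sign pattern $m_j>0$, $m_{j+1}<0$ just before the collision. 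Thus Lemma \ref{lem:SymCon} is valid on the entire interval $(0,t_c)$ and the argument is complete. The proof is essentially a one-line continuity argument built on the preserved symmetry, and I do not anticipate any serious obstacle beyond this housekeeping step.
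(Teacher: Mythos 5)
Your proof is correct and is essentially the paper's own argument: the paper declares the corollary immediate from Lemma \ref{lem:SymCon}, i.e.\ from the preserved identity $x_1(t)-x_2(t)=x_3(t)-x_4(t)$ on $(0,t_c)$ passed to the limit $t\to t_c^-$, which is exactly your continuity step. Your additional housekeeping (using $M_n\neq 0$, Theorem \ref{thm:notriple}(2) and Corollary \ref{thm:SignbeforeCollision} to rule out an earlier $(m_2,m_3)$ collision) is valid but not part of the paper's one-line justification.
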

\begin{center} 
\begin{figure}[heretp]
\includegraphics[height=6cm, width=12cm]{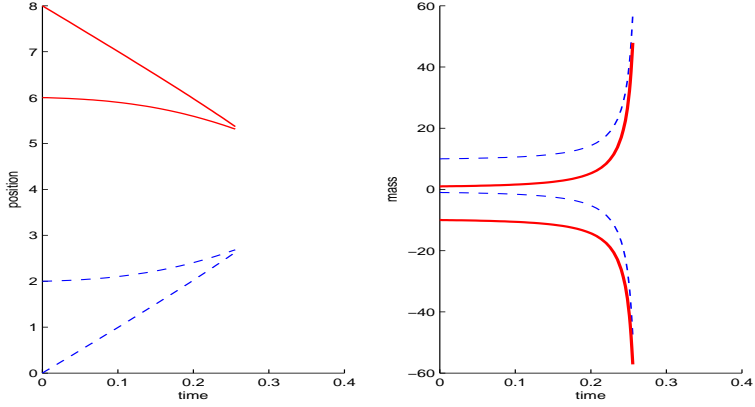}
\caption{Two symmetric peakon-antipeakon pairs with masses $m_1(0)=10=-m_4(0)$, $m_2(0)=-1=-m_3(0)$ undergo a simultaneous collision.  } \label{fig:1}
\end{figure}
\end{center}
\section{Collisions and shocks}\label{sec:shocks} 
In this section we investigate the behaviour of $m$ and $u$ at the time of collision(s).  
We start with $m$ and observe that since the collision of peakons occurs in pairs it is sufficient 
to study a fixed colliding pair $m_j, m_{j+1}$.    
\begin{theorem}\label{thm:shocks}
If $m_j$ collides with $m_{j+1}$ at time $t_c>0$ and the position $x_c$, then
\[\begin{aligned}&\lim_{t\to t_c^-}(m_j(t)\delta(x-x_j(t))+m_{j+1}(t)\delta(x-x_{j+1}(t))) \\
=&\left(\lim_{t\to t_c^-}(m_j+m_{j+1})\right)\delta(x-x_c)+\frac12\left(\dot x_j(t_c)-\dot x_{j+1}(t_c)\right)\delta'(x-x_c)\\
=&\left(\lim_{t\to t_c^-}(m_j+m_{j+1})\right)\delta(x-x_c)+\frac12\left(\lim_{t\to t_c^-}(u(x_j(t),t)-u(x_{j+1}(t),t)\right)\delta'(x-x_c)\end{aligned}\]
in $\mathscr{D}'(\mathbb{R})$.
\end{theorem}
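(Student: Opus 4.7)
The plan is to pair the distribution against an arbitrary test function $\phi\in\mathscr{D}(\R)$, Taylor expand $\phi$ around the collision point $x_c$, and then track each term using the Laurent expansion of $m_j,m_{j+1}$ near $t_c$ (given by Theorem \ref{thm:leadingcoef}) together with the fact that $x_j,x_{j+1}$ are analytic at $t_c$ (Theorem \ref{thm:merocentral}). All cancellations should fall out from the sign pattern $(-1)^s$ of the leading residues.

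Concretely, write $m_j(t)=\dfrac{-1/2}{t-t_c}+r_j(t)$ and $m_{j+1}(t)=\dfrac{1/2}{t-t_c}+r_{j+1}(t)$ where $r_j,r_{j+1}$ are regular at $t_c$ (this is Theorem \ref{thm:leadingcoef} specialized to the colliding pair, using the fact, proved in Theorem \ref{thm:notriple}, that no other peakon collides with this pair). Also expand
\[
x_j(t)=x_c+\dot x_j(t_c)(t-t_c)+O\bigl((t-t_c)^2\bigr),\qquad x_{j+1}(t)=x_c+\dot x_{j+1}(t_c)(t-t_c)+O\bigl((t-t_c)^2\bigr),
\]
and for a test function $\phi$ write $\phi(x_j)=\phi(x_c)+\phi'(x_c)(x_j-x_c)+\tfrac12\phi''(x_c)(x_j-x_c)^2+\cdots$, similarly for $\phi(x_{j+1})$. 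The pairing in question equals $m_j(t)\phi(x_j(t))+m_{j+1}(t)\phi(x_{j+1}(t))$, and I would multiply out the two Laurent series and the two Taylor series.

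The $\phi(x_c)$-coefficient is $m_j(t)+m_{j+1}(t)$, whose singular parts cancel thanks to the opposite residues, leaving the regular sum $r_j(t)+r_{j+1}(t)\to\lim_{t\to t_c^-}(m_j+m_{j+1})$. The $\phi'(x_c)$-coefficient is $m_j(t)(x_j-x_c)+m_{j+1}(t)(x_{j+1}-x_c)$, and the simple pole in $m_j$ combines with the simple zero in $x_j-x_c$ to yield the finite limit
\[
-\tfrac12\dot x_j(t_c)+\tfrac12\dot x_{j+1}(t_c)=-\tfrac12\bigl(\dot x_j(t_c)-\dot x_{j+1}(t_c)\bigr),
\]
which is precisely the pairing of $\phi$ with $\tfrac12\bigl(\dot x_j(t_c)-\dot x_{j+1}(t_c)\bigr)\delta'(x-x_c)$ since $\langle\delta'(x-x_c),\phi\rangle=-\phi'(x_c)$. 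Every remaining term contains either two regular factors (giving $O(1)$ contributions whose limits I must verify combine into parts of the $\phi(x_c)$ and $\phi'(x_c)$ limits above), or a factor $(x_j-x_c)^p$ with $p\geq 2$ paired with at most one simple pole, producing an $O(t-t_c)$ factor that vanishes in the limit.

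The last equality in the statement is immediate from the peakon ODE \eqref{eq:b-ode-short-x}, which gives $\dot x_j(t)=u(x_j(t),t)$ for all $t<t_c$, so taking $t\to t_c^-$ identifies $\dot x_j(t_c)-\dot x_{j+1}(t_c)$ with $\lim_{t\to t_c^-}[u(x_j(t),t)-u(x_{j+1}(t),t)]$. The main obstacle is purely bookkeeping: I must make sure that after multiplying the two Laurent-in-$t$ series by the two Taylor-in-$x$ series, the singular-in-$t$ contributions of order $(t-t_c)^{-1}$ truly cancel between the two peakons and not merely combine, and that the cross terms of the form $r_j(t)\cdot\phi'(x_c)(x_j(t)-x_c)$ are genuinely $O(t-t_c)$ rather than contributing to the $\delta'$ coefficient; both rely crucially on the universal sign pattern $(\mp\tfrac12)$ from Theorem \ref{thm:leadingcoef}.
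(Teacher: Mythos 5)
Your proposal is correct and follows essentially the same route as the paper: pair against a test function, use the universal residues $\mp\tfrac12$ from Theorem \ref{thm:leadingcoef} so the singular parts cancel in the $\delta$ coefficient, let the simple pole of the masses combine with the simple zero of $x_j(t)-x_c$ (analyticity from Theorem \ref{thm:merocentral}) to produce the $\delta'$ coefficient, and invoke \eqref{eq:b-ode-short-x} for the last equality. The only cosmetic difference is that the paper handles the $\delta'$ term via the difference quotient $\bigl(\varphi(x_j(t))-\varphi(x_{j+1}(t))\bigr)/\bigl(2(t-t_c)\bigr)$ in $t$, whereas you Taylor-expand $\varphi$ about $x_c$ (which, for a merely smooth test function, should be written with a remainder term rather than as an infinite series).
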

\begin{proof}
For an arbitrary $\varphi(x)\in\mathscr{D}(\mathbb{R})$,
\[\langle m_j(t)\delta(x-x_j(t))+m_{j+1}(t)\delta(x-x_{j+1}(t)),\varphi(x)\rangle=m_j(t)\varphi(x_j(t))+m_{j+1}(t)\varphi(x_{j+1}(t)).\]
Using Corollary \ref{thm:SignbeforeCollision} we can write
\[m_j=-\frac1{2(t-t_c)}+C_0+O(t-t_c),\quad m_{j+1}=\frac1{2(t-t_c)}+\tilde C_0+O(t-t_c) \] around $t_c$.  Hence 
\[\begin{aligned}&\lim_{t\to t_c^-}\langle m_j(t)\delta(x-x_j(t))+m_{j+i}(t)\delta(x-x_{j+1}(t)),\varphi(x)\rangle \\
=&(C_0+\tilde C_0)\varphi(x_c)-\lim_{t\to t_c}\frac{\varphi(x_j(t))-\varphi(x_{j+1}(t))}{2(t-t_c)} \\
=&\left(\lim_{t\to t_c^-}(m_j+m_{j+1})\right)\varphi(x_c)-\frac12\left(\lim_{t\to t_c^-}(\dot{x}_j-\dot{x}_{j+1})\right)\varphi'(x_c) \\
=&\left(\lim_{t\to t_c^-}(m_j+m_{j+1})\right)\varphi(x_c)-\frac12\left(\lim_{t\to t_c^-}(u(x_j(t),t)-u(x_{j+1}(t),t))\right)\varphi'(x_c),
\end{aligned}\] where in the last step we used equation \eqref{eq:b-ode-short-x}.  
The conclusion now follows from the definition of $\delta$ and $\delta '$.  
\end{proof}
Since $m=u-u_{xx}$ we have the immediate corollary. 
\begin{corollary}\label{cor:shocks} 
Suppose $m(0)=2 \sum_{k=1}^n m_k(0)\delta(x-x_k(0))$ is a multipeakon at $t=0$ for which 
$M_n\neq 0$ and such that 
at $t_c$ one, or several of its peakon-antipeakon pairs collide.  For any colliding 
pair $k, k+1$ let us denote $\lim_{t\to t_c^-} u(x_k(t))=u_l(x_k(t_c)), \; \lim_{t\to t_c^-} u(x_{k+1}(t)=u_r(x_k(t_c))$ respectively.  
Then $$\lim_{t\to t_c^-} m(t)=2\sum_{k=1}^n \tilde m_k(t_c) \delta (x-x_k(t_c))+2\sum_{\substack{k: \text{pairs} \\x_k, x_{k+1}\text{collide}}}
s_k(t_c) \delta'(x-x_k(t_c))\; \text{ in }  \cal{D}'(\R). $$ The shock strengths are given by $$s_k(t_c)=\frac{u_l(x_k(t_c))-u_r(x_k(t_c))}{2}, $$ 
and they satisfy the (strict) entropy condition 
$s_k(t_c)>0$.  
\end{corollary}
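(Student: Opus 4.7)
The plan is to decompose $m(t)=2\sum_{k=1}^n m_k(t)\delta(x-x_k(t))$ into contributions from non-colliding indices and from each colliding peakon--antipeakon pair, and to analyze each group separately in $\mathscr{D}'(\R)$. Theorem \ref{thm:notriple}(3) guarantees that at $t_c$ all collisions occur in disjoint pairs, so the decomposition is well defined, and Theorem \ref{thm:notriple}(1) says the only $m_k$'s that can blow up are precisely those whose peakons collide.

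For each non-colliding index $k$, Theorem \ref{thm:merocentral} makes $x_k(t)$ analytic at $t_c$ and Theorem \ref{thm:notriple}(1)--(2) keeps $m_k(t)$ finite with a fixed sign up to $t_c$. Pairing with a test function and passing to the limit, $m_k(t)\delta(x-x_k(t))\to m_k(t_c)\delta(x-x_k(t_c))$ in $\mathscr{D}'(\R)$, which furnishes the pure-$\delta$ terms with $\tilde m_k(t_c):=m_k(t_c)$. For each colliding pair $(j,j+1)$, Theorem \ref{thm:shocks} applies verbatim: the singular $\pm\tfrac{1}{2(t-t_c)}$ parts of $m_j$ and $m_{j+1}$ cancel, the finite parts combine into the coefficient $\tilde m_k(t_c):=\lim_{t\to t_c^-}(m_j+m_{j+1})$ of $\delta(x-x_c)$, and the $\delta'(x-x_c)$ coefficient equals $s_k(t_c)=\tfrac12\bigl(u_l(x_k(t_c))-u_r(x_k(t_c))\bigr)$ once one uses (\ref{eq:b-ode-short-x}) to identify $\dot x_j(t_c)-\dot x_{j+1}(t_c)$ with $u_l-u_r$. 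Summing over all colliding pairs and all non-colliding peakons yields the asserted distributional identity.

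The only remaining, and conceptually most interesting, point is the \emph{strict} entropy inequality $s_k(t_c)>0$. I would extract it from Theorem \ref{thm:notriple}(4), which states that the distance $d(t):=x_{k+1}(t)-x_k(t)$ has a simple zero at $t_c$: hence $d(t)=A(t-t_c)+O((t-t_c)^2)$ with $A\neq0$. Since $d(t)>0$ for $t\in(0,t_c)$ by the ordering assumption, one must have $A<0$, and therefore $\dot x_{k+1}(t_c)-\dot x_k(t_c)=A<0$, i.e.\ $u_r(x_k(t_c))<u_l(x_k(t_c))$, so $s_k(t_c)>0$.

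The \textbf{main obstacle} is already absorbed into Theorems \ref{thm:shocks} and \ref{thm:notriple}: once they are available, this corollary is mostly bookkeeping together with a one-line use of the simple-zero statement to upgrade the automatic inequality $s_k(t_c)\geq 0$ (forced by $d(t)\to 0^+$) to the strict entropy bound $s_k(t_c)>0$. The only place that deserves a bit of care is ensuring that, when several peakon--antipeakon pairs collide simultaneously at distinct positions, the distributional limits from each pair may be added independently; this is immediate because the supports of the limiting distributions are the distinct points $x_{k_1}(t_c)<\dots<x_{k_l}(t_c)$ supplied by the pairing structure of Theorem \ref{thm:notriple}.
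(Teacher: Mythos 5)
Your proposal is correct and follows essentially the same route as the paper: split $m(t)$ into non-colliding peakons plus colliding pairs, invoke Theorem \ref{thm:shocks} pairwise (using the pairwise-collision structure from Theorem \ref{thm:notriple}), and deduce the strict entropy bound $s_k(t_c)>0$ from the simple zero of $x_{k+1}-x_k$ in Theorem \ref{thm:notriple}(4). The only cosmetic difference is that the paper first obtains $s_k(t_c)\geq 0$ from the difference quotient $\frac{x_{k+1}(t)-x_k(t)}{t_c-t}$ and then upgrades to strictness via item (4), whereas you go directly to strictness from the simple-zero expansion; these are equivalent.
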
 
\begin{proof}
It suffices to prove the claim if there is only one colliding pair; the general case follows easily 
since masses collide pairwise.  For $t<t_c$ the measure evolves as $m(t)=2\sum_{k=1}^nm_k(t) \delta(x-x_k(t))$ where 
$x_k(t), m_k(t) $ satisfy equations \eqref{eq:b-ode-short-x}, \eqref{eq:b-ode-short-m} respectively.  Suppose now that the pair $j, j+1$ collides at the point $x_c$. Then 
by Theorem \ref{thm:shocks} $\lim_{t\to t_c} m(t)=2\sum_{k\ne j,j+1} m_k(t_c) \delta(x-x_k(t_c))+2\lim_{t\to t_c^-} (m_j+m_{j+1}(t) \delta(x-x_c) +2\frac12 (\dot x_j(t_c)-\dot x_{j+1}(t_c))\delta'(x-x_c). $ To prove that $s_j(t_c)\geq 0$ we write $s_j(t_c)=\frac12 \lim_{t\to t_c^-} (\dot x_j(t)-\dot x_{j+1}(t))=\frac12 \lim_{t\to t_c^-}(u(x_j(t),t)-u(x_{j+1}(t),t))$ and observe
\begin{equation*}
\dot x_j(t_c)-\dot x_{j+1}(t_c)=\lim _{t\to t_c} \frac{x_{j+1}(t)-x_j(t)}{t_c-t} 
\end{equation*}
which implies the entropy condition $s_j(t_c)\geq 0$ in view of the ordering assumption 
$x_j(t) <x_{j+1}(t)$.  The strict inequality follows from item (4) in Theorem \ref{thm:notriple}.  
\end{proof} 
The following amplification of the previous theorem brings the issues of the 
wave breakdown and a shock creation sharply into focus. To put our result into the proper perspective we first review the well-posedness result for $L^1(\mathbb{R} )\cap BV(\mathbb{R})$ proven by Coclite and Karlsen (\cite{coclite-karlsen-DPwellposedness}, Section 3).  
We present only the core result pertinent to our paper.

\begin{theorem} [Coclite-Karlsen]
Let $u_0\in L^1(\mathbb{R})\cap BV (\mathbb{R})$.  Then there exists a unique entropy weak solution to the Cauchy problem $u|_{t=0}=u_0$ for the DP equation \eqref{eq:DP}.  
\end{theorem}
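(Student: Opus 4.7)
The plan is to read the DP equation as a nonlocal scalar conservation law and transplant Kruzhkov's entropy framework. Inverting the Helmholtz operator in the $m$-form gives
$$u_t + \partial_x\!\left(\tfrac{1}{2}u^2\right) + \partial_x P = 0, \qquad P = G*\tfrac{3}{2}u^2,$$
with $G(x) = \tfrac{1}{2}e^{-|x|}$. The flux is the Burgers flux and the source $\partial_x P$ depends nonlocally on $u$ but is globally Lipschitz from $L^1\cap L^\infty$ to $L^\infty$ since $G_x$ is bounded and integrable; this is exactly what restores enough coercivity for a Kruzhkov-type theory.

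For existence I would use vanishing viscosity: solve
$$u^\varepsilon_t + u^\varepsilon u^\varepsilon_x + \partial_x\!\bigl(G*\tfrac{3}{2}(u^\varepsilon)^2\bigr) = \varepsilon u^\varepsilon_{xx}, \qquad u^\varepsilon(0) = u_0*\rho_\varepsilon,$$
which is classically well posed. The crux is $\varepsilon$-uniform $L^1$, $L^\infty$ and BV estimates. The $L^\infty$ bound follows from a nonlocal maximum principle (at a spatial maximum of $u^\varepsilon$, $u^\varepsilon_x=0$ so $\dot u^\varepsilon = -\partial_x P - \varepsilon u_{xx}^\varepsilon$ is controlled by $\|G_x\|_{L^\infty}\|u^\varepsilon\|_{L^2}^2$), and the BV bound from differentiating the equation in $x$, multiplying by $\sgn(u^\varepsilon_x)$, and using the decomposition $G_{xx} = G - 2\delta$ to express $\partial_x^2 P$ as a bounded perturbation. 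Helly's selection theorem then extracts a limit $u \in L^\infty((0,T);L^1\cap BV)$, and standard arguments pass the Kruzhkov inequalities
$$\partial_t|u^\varepsilon-k| + \partial_x\!\bigl[\sgn(u^\varepsilon-k)\bigl(\tfrac{1}{2}(u^\varepsilon)^2-\tfrac{1}{2}k^2\bigr)\bigr] + \sgn(u^\varepsilon-k)\partial_x P(u^\varepsilon) \leq 0$$
to the limit (modulo the vanishing viscous dissipation $\varepsilon|u^\varepsilon-k|_{xx}$).

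For uniqueness I would invoke Kruzhkov's doubling of variables. Given two entropy solutions $u, \tilde u$ with identical initial data, pairing the inequalities for $u(x,t)$ and $\tilde u(y,s)$ against a mollifier concentrated on the diagonal $x=y$, $t=s$ yields
$$\frac{d}{dt}\int_{\mathbb{R}} |u-\tilde u|\, dx \leq \int_{\mathbb{R}} \bigl|\partial_x P(u)-\partial_x P(\tilde u)\bigr|\, dx \leq C\|G_x\|_{L^\infty}\bigl(\|u\|_{L^\infty}+\|\tilde u\|_{L^\infty}\bigr)\|u-\tilde u\|_{L^1},$$
where the second inequality uses $u^2-\tilde u^2 = (u+\tilde u)(u-\tilde u)$, and Gronwall then forces $u\equiv\tilde u$.

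The main obstacle will be closing the BV estimate uniformly in $\varepsilon$. The Dirac contribution in $G_{xx}=G-2\delta$ returns a factor of $\|u^\varepsilon\|_{L^\infty}^2$ to the right hand side of the $\tfrac{d}{dt}\|u^\varepsilon_x\|_{L^1}$ inequality, so BV control is coupled to $L^\infty$ control, which in turn relies on the BV bound (via the Sobolev embedding $BV\hookrightarrow L^\infty$) to be subcritical. Running a joint Gronwall argument on $\|u^\varepsilon\|_{L^\infty}+\|u^\varepsilon_x\|_{L^1}$ is precisely where the $L^1\cap BV$ hypothesis on $u_0$ becomes essential: it prevents finite-time blow-up of $\|u^\varepsilon\|_{L^\infty}$ during the approximation, preserving the framework in which Kruzhkov's comparison theory applies and ruling out the spurious non-entropy solutions that the peakon analysis of the previous sections produces in distributional form.
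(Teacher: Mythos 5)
First, a point of orientation: the paper does not prove this statement at all. It is quoted, with citation, from Coclite and Karlsen \cite{coclite-karlsen-DPwellposedness} purely as background for the shockpeakon continuation discussed in Section \ref{sec:shocks}, so your proposal can only be measured against the published Coclite--Karlsen proof. In outline you do reproduce their strategy: the rewriting $u_t+\partial_x\bigl(\tfrac12 u^2\bigr)+\partial_x P=0$ with $P=G*\tfrac32 u^2$, $G=\tfrac12 e^{-|x|}$, is exactly the formulation they work with, and vanishing viscosity plus uniform $L^1$/$L^\infty$/BV bounds plus Kruzhkov doubling of variables (with the nonlocal term treated as a source and absorbed by Gronwall, using $u^2-\tilde u^2=(u+\tilde u)(u-\tilde u)$) is their existence--uniqueness scheme.

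There is, however, a genuine gap precisely at the point you flag as the main obstacle. Your $L^\infty$ bound invokes $\|u^\varepsilon\|_{L^2}^2$, but nothing in your scheme controls the $L^2$ norm: estimating it by $\|u^\varepsilon\|_{L^1}\|u^\varepsilon\|_{L^\infty}$ and inserting it into the $L^1$, $L^\infty$ and BV inequalities makes every right-hand side quadratic in the controlled quantities, so the joint Gronwall argument you propose is of Riccati type and closes only locally in time, whereas the theorem is global. The missing ingredient is DP-specific structure: the functional $\int y\,v\,dx$ with $y=(1-\partial_x^2)u$ and $(4-\partial_x^2)v=u$ is conserved by \eqref{eq:DP-m} and is equivalent to $\|u\|_{L^2}^2$ (Fourier symbol $(1+\xi^2)/(4+\xi^2)$); an approximate version of this conservation for the viscous flow yields a time-uniform $L^2$ bound, after which $\|\partial_x P\|_{L^\infty}\lesssim\|u\|_{L^2}^2$ gives at most linear growth of $\|u^\varepsilon\|_{L^\infty}$ and $\|\partial_x^2 P\|_{L^1}\lesssim\|u\|_{L^2}^2$ closes the BV estimate with no feedback from the $L^\infty$ norm at all (note also $G_{xx}=G-\delta$, not $G-2\delta$). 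Without this input your argument proves at most a local-in-time version of the statement. Smaller slips: $u\mapsto\partial_x P$ is quadratic, hence only locally Lipschitz; the $L^1$ stability estimate in the uniqueness step should use $\|G_x\|_{L^1}$ via Young's inequality rather than $\|G_x\|_{L^\infty}$; and the closing remark about the peakon analysis producing ``spurious non-entropy solutions'' has it backwards --- the whole point of Theorem \ref{thm:shocks} and Corollary \ref{cor:shocks} is that the colliding multipeakon limit has nonnegative shock strengths and therefore continues uniquely as an entropy weak solution within exactly this framework.
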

It is then proven in \cite{lundmark-shockpeakons} that the shockpeakon ansatz 
\begin{equation*}
u(x,t)=\sum_{j=1}^n \{m_j(t)-s_j(t) \sgn (x-x_j(t))\}e^{-|x-x_j(t)|}, 
\end{equation*}
is an entropy weak solution provided the shock strengths $s_j\geq 0$.  This sets the stage for the next theorem.

\begin{theorem}Assume that a multipeakon solution $u(x,t)$ exists on $\mathbb{R}^n\times[0,t_c)$, then $u(\cdot,t)\in L^1(\mathbb{R})\cap BV(\mathbb{R})$ for all $0\leq t<t_c$ and $u(\cdot,t)$ converges in $L^1$ to the shockpeakon 
\[u(x,t_c)=\sum_{i=1}^n\tilde m_i(t_c)e^{-|x-x_i(t_c)|}+\sum_{i=1}^nC_i\dot x_i(t_c)\mathrm{sgn}(x-x_i(t_c))e^{-|x-x_i(t_c)|},  \quad \] $u(\cdot,t_c)\in L^1(\mathbb{R})\cap BV(\mathbb{R})$, where $m_i(t)$'s Laurent expansion around $t_c$ is written as 
\[m_i(t)=\frac{C_i}{t-t_c}+\sum_{l=0}^\infty a_l(t-t_c)^l\stackrel{\mathrm{def}}{=}\frac{C_i}{t-t_c}+\tilde m_i(t), \] with the proviso that $C_i=0$ if the $i$th mass is not involved in a collision and 
$C_i=-\frac12$ for a colliding peakon, $C_i=\frac12$ for a colliding antipeakon, respectively.   
\end{theorem}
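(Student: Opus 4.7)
The plan is to split $u(x,t)$ into contributions from non-colliding masses, which behave regularly, and singular colliding pairs, which need careful analytic expansion. The first assertion, $u(\cdot,t)\in L^1(\mathbb{R})\cap BV(\mathbb{R})$ for $t<t_c$, is immediate because for such $t$ each $m_i(t)$ is finite and each translate of $e^{-|x|}$ belongs to both spaces.

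For the $L^1$ convergence as $t\to t_c^-$, by Theorems \ref{thm:leadingcoef} and \ref{thm:notriple} collisions occur only within disjoint pairs $(j,j+1)$ with residues $C_j=-\tfrac12$ and $C_{j+1}=\tfrac12$. Writing $m_j(t)=-\frac{1}{2(t-t_c)}+\tilde m_j(t)$ and $m_{j+1}(t)=\frac{1}{2(t-t_c)}+\tilde m_{j+1}(t)$, the contribution of such a pair to $u$ decomposes as
\[
\tilde m_j(t)e^{-|x-x_j(t)|}+\tilde m_{j+1}(t)e^{-|x-x_{j+1}(t)|}+\frac{e^{-|x-x_{j+1}(t)|}-e^{-|x-x_j(t)|}}{2(t-t_c)}.
\]
All positions $x_i$ and all non-colliding masses $m_i=\tilde m_i$ are holomorphic at $t_c$ (Theorem \ref{thm:merocentral}), so every term of the form $\tilde m_i(t)e^{-|x-x_i(t)|}$ converges in $L^1$ to $\tilde m_i(t_c)e^{-|x-x_i(t_c)|}$ by continuity and a uniform exponential bound.

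The heart of the proof is the finite-difference quotient above. Setting $\xi=x-x_c$ and using that $x_j(t),x_{j+1}(t)\to x_c$ with simple contact (item (4) of Theorem \ref{thm:notriple}), one has for any fixed $\xi\neq 0$ and $|t-t_c|$ sufficiently small that $\sgn(\xi-(x_{j+1}(t)-x_c))=\sgn(\xi-(x_j(t)-x_c))=\sgn(\xi)$. A direct Taylor expansion then yields the pointwise limit
\[
\frac{e^{-|x-x_{j+1}(t)|}-e^{-|x-x_j(t)|}}{2(t-t_c)}\longrightarrow\tfrac{1}{2}\bigl(\dot x_{j+1}(t_c)-\dot x_j(t_c)\bigr)\sgn(\xi)e^{-|\xi|},
\]
which, upon unpacking, equals $C_j\dot x_j(t_c)\sgn(\xi)e^{-|\xi|}+C_{j+1}\dot x_{j+1}(t_c)\sgn(\xi)e^{-|\xi|}$. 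To lift this to $L^1$ convergence I would use the mean value bound $|e^{-|\xi-b|}-e^{-|\xi-a|}|\le|b-a|e^{-|\xi|+\max(|a|,|b|)}$ combined with $|x_{j+1}(t)-x_j(t)|/|t-t_c|\le K$ near $t_c$ (the simple zero of the distance), producing a time-uniform integrable majorant $K'e^{-|\xi|}$. The dominated convergence theorem then closes the colliding-pair contribution.

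Finally, $u(\cdot,t_c)$ is a finite linear combination of $e^{-|x-x_i(t_c)|}$ terms and $\sgn(x-x_i(t_c))e^{-|x-x_i(t_c)|}$ terms, each lying in $L^1(\mathbb{R})\cap BV(\mathbb{R})$, so $u(\cdot,t_c)$ does too. The main obstacle I anticipate is the uniform control of the difference quotient $(e^{-|\xi-b|}-e^{-|\xi-a|})/(2(t-t_c))$ near $\xi=0$, where pointwise convergence becomes subtle because of the $|\cdot|$; the resolution is to rely on the uniform Lipschitz continuity of $e^{-|x-y|}$ in $y$ together with the simple zero of $x_{j+1}(t)-x_j(t)$ at $t_c$ to extract the integrable envelope required by dominated convergence.
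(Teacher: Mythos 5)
Your proposal is correct and follows the paper's overall skeleton: the same splitting of a colliding pair's contribution into the regular parts $\tilde m_i(t)e^{-|x-x_i(t)|}$ plus the singular difference quotient $v(x,t)=\frac{e^{-|x-x_{j+1}(t)|}-e^{-|x-x_j(t)|}}{2(t-t_c)}$, the same pointwise limit $\frac12\bigl(\dot x_{j+1}(t_c)-\dot x_j(t_c)\bigr)\sgn(x-x_c)e^{-|x-x_c|}$, and the same reduction of the general case to single pairs via the pairwise-collision structure of Theorem \ref{thm:notriple}. Where you genuinely diverge is in the passage from pointwise to $L^1$ convergence of $v$: the paper represents $v$ by the mean value theorem in $t$ separately on the three regions $x<x_j(t)$, $x>x_{j+1}(t)$ and $x_j(t)<x<x_{j+1}(t)$, applies dominated convergence on the two outer regions, and disposes of the middle region by combining the shrinking length $x_{j+1}(t)-x_j(t)\to 0$ with the boundedness of $\frac{x_{j+1}(t)-x_j(t)}{2(t_c-t)}$; you instead exhibit one time-uniform integrable majorant, $|v(x,t)|\le \frac{|x_{j+1}(t)-x_j(t)|}{2|t-t_c|}\,e^{-|x-x_c|+\max(|x_j(t)-x_c|,|x_{j+1}(t)-x_c|)}\le K\,e^{-|x-x_c|}$ on all of $\R$, using only the Lipschitz dependence of $e^{-|x-y|}$ on $y$ and the fact that $x_{j+1}-x_j$ is analytic and vanishes at $t_c$ (the simple zero from Theorem \ref{thm:notriple}(4) is more than enough to bound the ratio), and then invoke dominated convergence once, the exceptional set $\{x=x_c\}$ being null. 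This buys a shorter and cleaner argument that bypasses the bookkeeping on the shrinking interval $(x_j(t),x_{j+1}(t))$, which is precisely where the paper's own estimate is most laborious; on the other hand the inputs are identical in both proofs, namely the residues $C_j=-\frac12$, $C_{j+1}=\frac12$, the absence of multiple collisions, and the analyticity of the positions at $t_c$ (Theorems \ref{thm:merocentral}, \ref{thm:leadingcoef}, \ref{thm:notriple}), so the difference is one of technique rather than substance; I see no gap in your version.
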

\begin{proof}
We start with the case $n=2$. Then $u(x,t)=m_1(t)e^{-|x-x_1(t)|}+m_2(t)e^{-|x-x_2(t)|}$ and $x_1(t_c)=x_2(t_c)=x_c$. According to Theorem \ref{thm:leadingcoef}, we have
\[m_1(t)=-\frac1{2(t-t_c)}+\tilde m_1(t),\quad m_2(t)=\frac1{2(t-t_c)}+\tilde m_2(t),\] where $\tilde m_1(t),\tilde m_2(t)$ are analytic around $t_c$. It is clear that \[\tilde m_1(t)e^{-|x-x_1(t)|}+\tilde m_2(t)e^{-|x-x_2(t)|}\in L^1(\mathbb{R})\cap BV(\mathbb{R}) \quad\text{for all $0\leq t\leq t_c$}.\]
By the mean value theorem we find that\[\begin{aligned}&v(x,t)\stackrel{\mathrm{def}}{=}\frac1{2(t-t_c)}(e^{-|x-x_2(t)|}-e^{-|x-x_1(t)|})\\
=&\left\{\begin{aligned}
&\frac12\left.\left[\dot x_1(s)e^{x-x_1(s)}-\dot x_2(s)e^{x-x_2(s)}\right]\right|_{s=t+\theta_1(t_c-t)},&&x<x_1(t)<x_2(t),\\
&\frac12\left.\left[\dot x_2(s)e^{x_2(s)-x}-\dot x_1(s)e^{x_1(s)-x}\right]\right|_{s=t+\theta_2(t_c-t)},&&x_1(t)<x_2(t)<x,\\
&\frac{e^{x-x_c}-e^{x_c-x}}{2(t-t_c)}-\left.\frac12(\dot x_1(s)e^{x_1(s)-x}+\dot x_2(s)e^{x-x_2(s)})\right|_{s=t+\theta_3(t_c-t)},&&x_1(t)<x<x_2(t), 
\end{aligned}\right.
\end{aligned} \]
where $0<\theta_j<1, \; j=1,2,3$. Hence we have the pointwise limit
\[\lim_{t\to t_c^-}v(x,t)=\begin{cases} \mathrm{sgn}(x-x_c)(-\frac12\dot x_1(t_c)+\frac12\dot x_2(t_c))e^{-|x-x_c|}, \;  \text{ for } x\neq x_c, \\
-\frac{\dot x_1(t_c)+\dot x_2(t_c)}{2}, \;  \text{ for } x=x_c.  \end{cases} 
\]
Let us define
\begin{equation*}
v(x,t_c)=\begin{cases} \mathrm{sgn}(x-x_c)(-\frac12\dot x_1(t_c)+\frac12\dot x_2(t_c))e^{-|x-x_c|}, \;  \text{ for } x\neq x_c, \\
0, \;  \text{ for } x=x_c.  \end{cases} 
\end{equation*}
 and consider the integral 
\[\int\limits_{-\infty}^{+\infty}|v(x,t)-v(x,t_c)|\mathrm{d}x= \left(\int\limits_{-\infty}^{x_1(t)}+\int\limits_{x_1(t)}^{x_2(t)}+\int\limits_{x_2(t)}^{+\infty}\right)|v(x,t)-v(x,t_c)|\mathrm{d}x. \] Then the first and the last term of the right hand side converge to zero as $t\to t_c^-$ due to Lebesgue's dominated convergence theorem. 

Observe that the second term satisfies \[\begin{aligned}
&\int\limits_{x_1(t)}^{x_2(t)}|v(x,t)-v(x,t_c)|\mathrm{d}x\leq\int\limits_{x_1(t)}^{x_2(t)}|v(x,t)|\mathrm{d}x+\int\limits_{x_1(t)}^{x_2(t)}|v(x,t_c)|\mathrm{d}x\\
\leq &\int\limits_{x_1(t)}^{x_2(t)}|v(x,t)|\mathrm{d}x+\int\limits_{x_1(t)}^{x_2(t)}\frac{|e^{x-x_2(t)}-e^{x_c-x}|}{2(t_c-t)}\mathrm{d}x +\frac12\int\limits_{x_1(t)}^{x_2(t)}|\dot x_1(s)|e^{x_1(s)-x}\mathrm{d}x+\frac12\int\limits_{x_1(t)}^{x_2(t)}|\dot x_2(s)|e^{x-x_2(s)}\mathrm{d}x\\
= &\int\limits_{x_1(t)}^{x_2(t)}|v(x,t)|\mathrm{d}x+\frac{x_2(t)-x_1(t)}{2(t_c-t)}|e^{y-x_c}-e^{x_c-y}|+\frac12\int\limits_{x_1(t)}^{x_2(t)}
|\dot x_1(s)|e^{x_1(s)-x}\mathrm{d}x+\frac12\int\limits_{x_1(t)}^{x_2(t)}|\dot x_2(s)|e^{x-x_2(s)}\mathrm{d}x\\
\end{aligned}\] where $s\in (t,t_c)$ and $y\in (x_1(t),x_2(t))$. Since $|\dot x_1(s)|e^{x_1(s)-x}$ and $|\dot x_2(s)|e^{x-x_2(s)}$ are bounded, and 
$$x_2(t)-x_1(t)\to 0, \quad 
\frac{x_2(t)-x_1(t)}{2(t_c-t)}\to \dot x_1(t_c)-\dot x_2(t_c),\quad |e^{y-x_c}-e^{x_c-y}|\to 0$$
as $t\to t_c^-$, we have that $v(x,t)$ converges to $v(x,t_c)$ in the sense of $L^1$, which shows that the conclusion holds for $n=2$.

In general, since collisions can only occur in pairs, we can assume that $m_{j_1}(t),m_{j_1+1}(t),m_{j_2}(t),m_{j_2+1}(t),\ldots,m_{j_k}(t),m_{j_k+1}(t)$ blow up at $t_c$ and all the other $m_i$'s remain bounded. It is clear that
$m_i(t)e^{-|x-x_i(t)|}$ lies in $L^1(\mathbb{R})\cap BV(\mathbb{R})$ and converges to $m_i(t_c)e^{-|x-x_i(t_c)|}$ in $L^1$ if $m_i(t)$ remains bounded at $t_c$. Meanwhile, according the proof above, we can easily see that 
\[m_{j_s}(t)e^{-|x-x_{j_s}(t)|}+m_{j_s+1}(t)e^{-|x-x_{j_s+1}(t)|}\in L^1(\mathbb{R}) \cap BV(\mathbb{R}) \quad\text{for all $1\leq s\leq k$},\] whose limit is 
\[\begin{aligned}&\tilde m_{j_s}(t_c)e^{-|x-x_{j_s}(t_c)|}+\tilde m_{j_s+1}(t_c)e^{-|x-x_{j_s+1}(t_c)|}\\
&-\frac12\sgn(x-x_{j_s}(t_c))(\dot x_{j_s}(t_c)-\dot x_{j_s+1}(t_c))e^{-|x-x_j(t_c)|}\end{aligned}\] as $t\to t_c^-$, which leads to the conclusion.
\end{proof}
  
\begin{corollary}
The limit of a multipeakon $u(\cdot,t) $ for 
$t \rightarrow t_c^-$ has a unique entropy weak extension which is a shockpeakon in the sense of 
H. Lundmark.
\end{corollary}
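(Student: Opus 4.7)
The plan is to combine the $L^1$ convergence established in the previous theorem with Lundmark's entropy weak solution result and the Coclite--Karlsen well-posedness theorem. Everything needed is already in place; the task is essentially bookkeeping to identify the limit as a Lundmark shockpeakon with non-negative shock strengths.

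First, I would recall from the preceding theorem that $u(\cdot,t)\to u(\cdot,t_c)$ in $L^1(\R)$, where $u(\cdot,t_c)\in L^1(\R)\cap BV(\R)$ has the explicit form
\[
u(x,t_c)=\sum_{i=1}^n\tilde m_i(t_c)e^{-|x-x_i(t_c)|}+\sum_{i=1}^n C_i\dot x_i(t_c)\sgn(x-x_i(t_c))e^{-|x-x_i(t_c)|}.
\]
The non-colliding masses contribute only to the first sum since their $C_i=0$. For a colliding pair $(j_s,j_s+1)$ the positions collapse to a common point $x_c=x_{j_s}(t_c)=x_{j_s+1}(t_c)$, and with $C_{j_s}=-\tfrac12$, $C_{j_s+1}=\tfrac12$ the two sign-terms combine at $x_c$ into a single contribution $-s_{j_s}(t_c)\,\sgn(x-x_c)e^{-|x-x_c|}$ with
\[
s_{j_s}(t_c)=\tfrac12\bigl(\dot x_{j_s}(t_c)-\dot x_{j_s+1}(t_c)\bigr).
\]
This puts $u(\cdot,t_c)$ into the shockpeakon form introduced by Lundmark.

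Next I would invoke Corollary \ref{cor:shocks} together with item (4) of Theorem \ref{thm:notriple}: the former identifies $s_{j_s}(t_c)$ as $\tfrac12\bigl(u_l(x_c)-u_r(x_c)\bigr)\ge 0$, and the latter (simple zero of the distance between colliding peakons) upgrades this to strict positivity $s_{j_s}(t_c)>0$. Thus the limit is a bona fide shockpeakon with strictly positive shock strengths, which by Lundmark's result \cite{lundmark-shockpeakons} determines an entropy weak solution of \eqref{eq:DP} defined for $t\ge t_c$.

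Finally, since $u(\cdot,t_c)\in L^1(\R)\cap BV(\R)$, the Coclite--Karlsen theorem quoted above applies to the Cauchy problem with initial time $t_c$ and initial datum $u(\cdot,t_c)$, yielding a \emph{unique} entropy weak solution on $[t_c,\infty)$. Concatenating the original multipeakon on $[0,t_c)$ with the shockpeakon continuation on $[t_c,\infty)$ gives a global entropy weak solution, and its uniqueness on $[t_c,\infty)$ identifies it with Lundmark's shockpeakon. I do not anticipate a genuine obstacle: the only delicate point is verifying the strict positivity of the $s_{j_s}$, and this has already been secured by Theorem \ref{thm:notriple}(4), so the corollary reduces to a direct appeal to existing results.
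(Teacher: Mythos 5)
Your proposal is correct and follows essentially the same route the paper intends: the corollary is presented as an immediate consequence of the preceding $L^1$-convergence theorem, with the shock strengths identified and shown strictly positive via Corollary \ref{cor:shocks} and Theorem \ref{thm:notriple}(4), Lundmark's result guaranteeing that a shockpeakon with $s_j\geq 0$ is an entropy weak solution, and the Coclite--Karlsen theorem providing uniqueness of the entropy weak continuation from the $L^1\cap BV$ datum $u(\cdot,t_c)$. Your assembly of these ingredients matches the paper's (unwritten but clearly intended) argument.
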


\section{Acknowledgments}

We thank Hans Lundmark for numerous perceptive comments.  
J. S.  would like to thank the Centro Internacional de Ciencias (CIC)  in Cuernavaca (Mexico) for hospitality and F. Calogero for making the stay so enjoyable and productive.  
This work was supported by National Natural Science
Funds of China \newline
 [NSFC10971155 to L.Z.]; and Natural Sciences and Engineering Research Council of Canada
 [NSERC163953 to J.S.].  Both authors would like to thank the Department of Mathematics and Statistics of the University of Saskatchewan for making the collaboration possible.

\end{document}